\documentclass[11pt]{article}

\oddsidemargin=17pt \evensidemargin=17pt
\headheight=9pt     \topmargin=26pt
\textheight=556pt   \textwidth=436.5pt

\usepackage{amsmath,amssymb,graphicx,color,amsthm}
\usepackage[linesnumbered,ruled]{algorithm2e}
\usepackage{tikz}
\usepackage{dsfont}
\usepackage{mathtools}
\usetikzlibrary{shapes.geometric, arrows}
\usepackage{algorithmic}
\usepackage[english]{babel}
\usepackage[utf8]{inputenc}
\usepackage{mathrsfs}  
\usepackage{afterpage}
\usepackage{mathtools}
\usepackage{amsfonts}
\usepackage[colorinlistoftodos]{todonotes}
\usepackage{enumerate}
\usepackage[round]{natbib}
\usepackage{hyperref}
\newcommand{\excise}[1]{}

\leftmargini=5.5ex
\leftmarginii=3.5ex

%new math symbols taking no arguments

\newcommand\RR{\mathbb{R}}

\newcommand\EE{\mathbb{E}}

\newcommand{\tr}{\operatorname{tr}}

%redefined math symbols taking no arguments

%\floatname{algorithm}{Procedure}

\newcommand\blfootnote[1]{%
  \begingroup
  \renewcommand\thefootnote{}\footnote{#1}%
  \addtocounter{footnote}{-1}%
  \endgroup
}

%new math symbols taking arguments

%redefined math symbols taking arguments

\newtheorem{theorem}{Theorem}
\newtheorem{definition}{Definition}
\newtheorem{lemma}{Lemma}

\newtheorem*{example*}{Example}

\newtheorem{corollary}{Corollary}
\newtheorem{remark}{Remark}

%math operators not in math italic font

\DeclareMathOperator\var{var}
\DeclareMathOperator\diag{diag}

\DeclareMathOperator\argmin{argmin}
\DeclareMathOperator\argmax{argmax}

\DeclareMathOperator\Id{I}
\DeclareMathOperator\eig{eig}
\DeclareMathOperator{\Gr}{Gr}
\DeclarePairedDelimiterX{\infdivx}[2]{(}{)}{%
	#1\;\delimsize\|\;#2%
}

%for easy 2 x 2 matrices

%for easy column vectors of size 2

%for \marginpar to fit optimally
%hoffset=-1in
\setlength\marginparwidth{2.2in}
\setlength\marginparsep{1mm}

\newcommand{\RNum}[1]{\uppercase\expandafter{\romannumeral #1\relax}}

\xdefinecolor{dukeblue}{rgb}{0.004,0.129,0.412}

%%%%%%%%%%%%%%%%%%%%%%%%%%%%%%%%%%%%%%%%%%%%%%%%%%%%%%%%%%%%%%%%%%%%%%
\begin{document}%%%%%%%%%%%%%%%%%%%%%%%%%%%%%%%%%%%%%%%%%%%%%%%%%%%%%%
	%%%%%%%%%%%%%%%%%%%%%%%%%%%%%%%%%%%%%%%%%%%%%%%%%%%%%%%%%%%%%%%%%%%%%%

\title{Probabilistic Contrastive Principal Component Analysis}
\author{\\[1ex]Didong Li$^{1,2}$\thanks{Equal contribution}, Andrew Jones$^{1}$\footnotemark[1], and Barbara E. Engelhardt$^{1,3}$\\
{\em Department of Computer Science, Princeton University$^{1}$}\\
{\em Department of Biostatistics, University of California, Los Angeles$^{2}$}\\
{\em Center for Statistics and Machine Learning, Princeton University$^{3}$}

}
%\date{\vspace{-5ex}}
\maketitle

\begin{abstract}
Dimension reduction is useful for exploratory data analysis. In many applications, it is of interest to discover variation that is enriched in a ``foreground'' dataset relative to a ``background'' dataset. Recently, contrastive principal component analysis (CPCA) was proposed for this setting. However, the lack of a formal probabilistic model makes it difficult to reason about CPCA and to tune its hyperparameter. In this work, we propose probabilistic contrastive principal component analysis (PCPCA), a model-based alternative to CPCA. We discuss how to set the hyperparameter in theory and in practice, and we show several of PCPCA's advantages over CPCA, including greater interpretability, uncertainty quantification and principled inference, robustness to noise and missing data, and the ability to generate data from the model. 
We demonstrate PCPCA's performance through a series of simulations and case-control experiments with datasets of gene expression, protein expression, and images.\blfootnote{Code for the model and experiments is available at \url{https://github.com/andrewcharlesjones/pcpca}.}
\end{abstract}

\section{Introduction}
Principal component analysis (PCA) is a popular technique for dimension reduction and data visualization~\citep{hotelling1933analysis}. PCA has been widely used to understand the low-dimensional structure of datasets in a variety of scientific applications~\citep{jirsa1994theoretical,brenner2000adaptive,novembre2008interpreting, darbyshire2016pricing,pasini2017principal}.
In addition to its practical utility in data exploration tasks, estimation in PCA is computationally feasible using, for example, singular value decomposition (SVD). Moreover, PCA offers a satisfying geometric interpretation, namely, that the PCs capture orthogonal directions of maximum variation in the data. There is an immense literature on non-linear generalizations of PCA including kernel PCA~\citep{scholkopf1998nonlinear}, generalized PCA~\citep{vidal2005generalized}, and principal curves~\citep{hastie1989principal}, as well as modifications to PCA that incorporate sparsity~\citep{tibshirani1996regression,zou2005regularization,zou2006sparse}, robustness~\citep{candes2011robust}, and more. In addition, probabilistic PCA (PPCA, \citealt{roweis1998algorithms,tipping1999probabilistic}) was developed to provide a model-based alternative to PCA, where the traditional objective function is re-interpreted as the likelihood estimate of a latent variable model that is a special homoskedastic version of Gaussian factor analysis~\citep{fruchter1954introduction}. A non-linear version of probabilistic PCA was described soon afterwards in a Gaussian process latent variable model (GPLVM, \citealt{lawrence2003gaussian}).

However vast, these collective PCA methods are still not suitable for some applications. In this work, we consider settings in which the dataset consists of two groups --- a \emph{foreground group} and a \emph{background group} --- and we are interested in identifying structure, variation, and information unique to the foreground group. This situation arises naturally in many scientific experiments with two or more subpopulations, such as case-control studies. For example, in a genomics context, the foreground data could be gene expression measurements from patients with a disease, and the background data could be measurements from healthy patients~\citep{twine2011whole,zheng2017massively,young2018single}. In this case, the goal is to identify transcriptional structure that is enriched in patients with the disease relative to healthy patients. Clearly, PCA is not suitable in this contrastive setting because PCA only identifies structure that exists across the union of the two groups or structure in each group in isolation. 

Contrastive modeling approaches have recently been proposed for this purpose. As a first push in this direction, a general contrastive learning framework was developed for mixture models~\citep{zou2013contrastive}. More recently, contrastive PCA (CPCA) was developed~\citep{abid2017contrastive,abid2018exploring} to find contrastive principal components (CPCs) that maximize variance in the foreground and minimize the variance in the background. However, in its original formulation, CPCA lacks a formal probabilistic model, so it is difficult to perform statistical inference within this framework. Moreover, the current CPCA framework does not allow a geometric interpretation.

In this paper, we develop probabilistic contrastive principal component analysis (PCPCA), a model-based alternative to CPCA for contrastive variation estimation. We recast the CPCA objective in a way that is amenable to a geometric interpretation, and we extend this analysis to the probabilistic setting. We then present a novel contrastive objective function which takes the form of a relative likelihood, and we provide a simple maximum relative likelihood estimate (MRLE) for the model. Furthermore, we develop a gradient descent algorithm that optimizes the objective in the presence of missing data. 

We show that PCPCA is a more general model than PCA, PPCA, or CPCA, and that these three methods can be recovered as special cases of PCPCA, thus providing a unifying framework to understand these methods. Unlike CPCA, our model is both generative, providing a model-based approach that allows for uncertainty quantification and principled inference. Unlike PPCA, our proposed method extracts variation that is unique to the foreground data while excluding variation shared between the foreground and background data, which is a critical goal in many experimental settings. 

PCPCA may be applied to a variety of statistical and machine learning problem domains including dimension reduction, synthetic data generation, missing data imputation, and clustering. We demonstrate the model's behavior and capabilities through an extensive series of simulations and experiments with datasets of case/control gene and protein expression, and biological image data. 

The specific contributions of our work to this field of PCA-based methods are the following. First, we present probabilistic contrastive component analysis (PCPCA), a model-based alternative to CPCA. Next, we show that three existing dimension reduction methods --- PCA, PPCA, and CPCA --- are special cases of our model. Then, we demonstrate several advantages of PCPCA, including a more principled probability model, a geometric interpretation analogous to that of PCA, a generalized inference procedure, robustness to missing data, and the ability to generate data from the model. Finally, we provide theoretical insight into the tuning parameter $\gamma$ in both CPCA and PCPCA, which controls the degree to which the model focuses on variation in the background or foreground data.

This paper is organized as follows. First, we review related dimension reduction methods, including PCA, PPCA, and CPCA. Second, we provide a novel geometric interpretation of CPCA, along with conditions under which CPCA is well-defined. Third, we present PCPCA, derive its maximum likelihood estimators, and show that PCA, PPCA, and CPCA are special cases of this model. Fourth, we present a generalized Bayes approach for performing posterior inference. Fifth, we present a gradient descent algorithm for fitting our model in the presence of latent variables or missing data. Finally, we demonstrate our model's performance through a series of experiments with simulated, biomedical, and image data. Proofs are in the Appendix.

\section{Background}
\subsection{Principal Component Analysis (PCA)}
Let $x_1,\cdots,x_n\in\RR^D$ be i.i.d. observations and $X\in\RR^{n\times D}$ with the $i$th row $x_i^\top$. PCA is designed to find the best $d$-dimensional affine subspace to represent the data, where $d<D$. There are several equivalent definitions of PCA. We review two of them below.

The first definition is derived from a geometric perspective, where PCA finds a hyperplane $V$ that minimizes the distance between the samples and this hyperplane:
\begin{equation}\label{eqn:PCAl2}
    \underset{V^\top V=\Id_d}{\argmin} ~\sum_{i=1}^n d^2(x_i,V)=\underset{V^\top V=\Id_d}{\argmin}~ \sum_{i=1}^n \|x_i-VV^\top x_i\|^2=\underset{V^\top V=\Id_d}{\argmin}~ \frac{1}{n}\sum_{i=1}^n \|x_i-VV^\top x_i\|^2,
\end{equation}
where $V\in\RR^{D\times d}$ has orthonormal column(s) , representing a $d$-dimensional subspace of $\RR^D$. The solution is given by
\[
V=[v_1,\cdots,v_d],~v_j=\mathrm{eig}_j\left(\sum_{i=1}^nx_ix_i^\top\right)=\mathrm{eig}_j(C),
\]
where $\eig_j$ is the $j$th eigenvalue of $C=\sum_{i=1}^nx_ix_i^\top$ in the descending order.

The second definition, which leads to an equivalent solution as Equation~\eqref{eqn:PCAl2}, is motivated from a statistical perspective. In particular, PCA maximizes the variance of the projected data onto each principal component, subject to the components being orthogonal to one another (assume $d=2$ for simplicity):
\begin{equation}\label{eqn:PCAvar}
    \max_{v_1^\top v_1=1}\var(v_1^\top x_i)=\max_{v_1^\top v_1=1}~ \sum_{i=1}^n v_1^\top x_ix_i^\top v_1=\max_{v_1^\top v_1=1}~v_1^\top Cv_1,
\end{equation}
\begin{equation}
    \max_{v_2^\top v_2=1,v_1^\top v_2=0}\var(v_2^\top(x_i-v_1v_1^\top x_i))=\max_{v_2^\top v_2=1,v_1^\top v_2=0}v_2^\top Cv_2.
\end{equation}
The geometric and statistical frameworks for PCA yield equivalent solutions, but having multiple perspectives gives greater insight into the method. Our work is motivated by these complementary perspectives (\autoref{thm:CPCAloss}).

Note that we drop the mean parameter since, in practice, the data can easily be centered to have zero mean. Thus, for simplicity, throughout this paper we assume all data include features that are centered at zero.
\subsection{Probabilistic PCA (PPCA)}
PCA may be generalized in the form of a probabilistic model. Assume $z\sim N(0,\Id_d)$, $x=Wz+\epsilon$ with $W\in\RR^{D\times d}$, $\epsilon\sim N(0,\sigma^2\Id_D)$. Then
\begin{equation*}
x\sim N(0, WW^\top+\sigma^2 \Id_D).
\end{equation*}
The objective is to maximize the likelihood with respect to the parameters $W$ and $\sigma^2$: 
\begin{equation}\label{eqn:PPCA}
\underset{W,\sigma^2}{\argmax} ~p(X|W,\sigma^2).
\end{equation}
The MLE of $W$ and $\sigma^2$ are given by \citep{roweis1998algorithms,tipping1999probabilistic}:
\begin{equation*}
\widehat{W}_{ML} = U(\Lambda-\widehat{\sigma}^2_{ML} I_d)^{1/2}R,~~\widehat{\sigma}^2_{ML} = \frac{1}{D-d}\sum_{i=d+1}^D\lambda_i,
\end{equation*}
where $U$ consists of the first $d$ eigenvectors of $C=\sum_{i=1}^nx_ix_i^\top$ with eigenvalues $\lambda_1\geq \cdots\geq\lambda_D>0$, $\Lambda=\diag\{\lambda_1,\cdots,\lambda_d\}$ and $R\in \mathrm{O}(d)$ is any rotation matrix.

That is, the hyperplane obtained by PPCA only differs by a re-scaling of the basis from the PCA hyperplane. In other words, PPCA is ``equivalent'' to PCA, and this becomes exact when $\sigma^2\to0$.

\subsection{Contrastive PCA (CPCA)}
PCA can also be generalized for contrastive modeling of two datasets. For foreground observations $x_1,\cdots,x_n\in\RR^D$ and background observations $y_1,\cdots,y_m\in\RR^D$, contrastive PCA (CPCA, \citealt{abid2018exploring}) is designed to discover low-dimensional structure that
is unique to or enriched in the foreground dataset $X$ relative to the background dataset $Y$. Let $C_X = \frac{1}{n}\sum_{i=1}^n x_ix_i^\top$ be the empirical covariance matrix for $X$ and $C_Y = \frac{1}{m}\sum_{j=1}^my_jy_j^\top$ for $Y$. 

Recall the statistical perspective of PCA given by \autoref{eqn:PCAvar}. In the contrastive setting, for any unit vector $v$, we have two variances --- the foreground variance and background variance --- given by $v^\top C_Xv$ and $v^\top C_Yv$, respectively. The objective of CPCA is to identify directions $v$ that account for a large amount of variance in the foreground and a small amount of variance in the background. Specifically, CPCA solves the following optimization problem:
\begin{equation}\label{eqn:CPCAvar}
    \underset{v^\top v=1}{\argmax}~v^\top C_Xv-\gamma v^\top C_Yv=\underset{v^\top v=1}{\argmax}~v^\top Cv,
\end{equation}
where $\gamma\in[0,\infty]$ is a tuning parameter, and $C=C_X-\gamma C_Y$.

The solution of CPCA is the same as PCA if we replace $C_X$ by $C$, namely, the optimal $v$ is the top eigenvector of $C$. From the definition of CPCA, it is clear that CPCA reduces to PCA when $\gamma =0$.

\section{A deeper look at CPCA}
In this section, we analyze some important aspects of CPCA that were not discussed in previous studies. These analyses provide insight into when CPCA is well-defined, and in turn provide motivation for our proposed model, PCPCA, which is described in the next section.

\subsection{Geometric interpretation of CPCA}
CPCA was originally defined from a statistical perspective (\autoref{eqn:CPCAvar}, \citealt{abid2018exploring}). Recalling the geometric definition of PCA, \autoref{eqn:PCAl2}, it is natural to consider whether there also exists a geometric interpretation for CPCA.
\begin{theorem}\label{thm:CPCAloss}
The statistical objective function of CPCA in Equation \eqref{eqn:CPCAvar} is equivalent to the following geometric objective function
\begin{equation}\label{eqn:CPCAl2}
    \underset{v^\top v}{\argmin}~\frac{1}{n} \sum_{i=1}^n \|x_i-vv^\top x_i\|^2-\gamma \frac{1}{m} \sum_{j=1}^m \|y_j-vv^\top y_j\|^2.
\end{equation}
\end{theorem}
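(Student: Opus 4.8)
The plan is to reduce both optimization problems to the same quadratic form in $v$ by exploiting the fact that $vv^\top$ is an orthogonal projection whenever $v$ is a unit vector. (I read the constraint under the $\argmin$ in \eqref{eqn:CPCAl2} as $v^\top v=1$, matching \eqref{eqn:CPCAvar}; the identity below requires exactly this normalization.) The core observation is a Pythagorean decomposition of each reconstruction error into a $v$-independent piece and a variance piece, after which the equivalence is immediate up to a sign and an additive constant.

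First I would establish the pointwise identity. Since $v^\top v=1$, the matrix $vv^\top$ satisfies $(vv^\top)(vv^\top)=v(v^\top v)v^\top=vv^\top$, so it is the orthogonal projection onto $\mathrm{span}(v)$. Expanding,
\begin{equation*}
\|x-vv^\top x\|^2 = x^\top x - 2\,x^\top vv^\top x + x^\top v(v^\top v)v^\top x = \|x\|^2 - (v^\top x)^2 = \|x\|^2 - v^\top x x^\top v.
\end{equation*}
Summing this over the foreground samples and dividing by $n$ gives $\frac{1}{n}\sum_{i=1}^n\|x_i-vv^\top x_i\|^2 = \frac{1}{n}\sum_{i=1}^n\|x_i\|^2 - v^\top C_X v$, and the analogous identity holds for the background data with $C_Y$.

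Substituting both expressions into the geometric objective \eqref{eqn:CPCAl2}, the two sums of squared norms collect into the single scalar $\kappa := \frac{1}{n}\sum_{i=1}^n\|x_i\|^2 - \gamma\,\frac{1}{m}\sum_{j=1}^m\|y_j\|^2$, which does not depend on $v$, leaving
\begin{equation*}
\frac{1}{n}\sum_{i=1}^n\|x_i-vv^\top x_i\|^2 - \gamma\,\frac{1}{m}\sum_{j=1}^m\|y_j-vv^\top y_j\|^2 = \kappa - \bigl(v^\top C_X v - \gamma\, v^\top C_Y v\bigr).
\end{equation*}
Since $\kappa$ is constant over the feasible set $\{v:v^\top v=1\}$, minimizing the left-hand side is equivalent to maximizing $v^\top C_X v-\gamma\,v^\top C_Y v = v^\top C v$, which is precisely the statistical objective \eqref{eqn:CPCAvar}; both problems therefore share the same argmax/argmin, namely the top eigenvector of $C=C_X-\gamma C_Y$.

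I do not anticipate a substantive obstacle: the argument is a direct computation, and the only point requiring care is the projection identity, which is where the unit-norm constraint is essential (without $v^\top v=1$ the cross term does not cancel to yield $-(v^\top x)^2$). If one wishes to state the result for a $d$-dimensional subspace $V$ with $V^\top V=\Id_d$ rather than a single component, the same steps go through verbatim with $(v^\top x)^2$ replaced by $\tr(V^\top xx^\top V)$, so the single-vector case presented here captures the entire idea.
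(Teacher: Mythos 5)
Your proposal is correct and follows essentially the same route as the paper's own proof: expand $\|x-vv^\top x\|^2$ using the idempotence of $vv^\top$ under the constraint $v^\top v=1$, discard the $v$-independent terms, and flip the sign to convert the $\argmin$ into the $\argmax$ of $v^\top C_X v-\gamma v^\top C_Y v$. Your write-up is in fact slightly cleaner, since the paper's displayed expansion contains a couple of sign/transpose typos in the background-data terms that your Pythagorean identity avoids.
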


\noindent The proof can be found in Appendix \ref{pf:CPCAloss}. From a geometric perspective, the objective of CPCA is to find a hyperplane that is close to the foreground data but far from the background data. This coincides with the intuition of CPCA's overall goal, which is to identify the unique information in the foreground data.

In addition to the geometric intuition provided by \autoref{thm:CPCAloss}, this theorem also allows us to adapt distance-based algorithms to the contrastive setting. Specifically, in any of these algorithms, one can consider replacing the distance by the ``constrastive distance.'' For example, sparse CPCA has been developed following this philosophy \citep{boileau2020exploring}, although without this justification. However, it is important to note that the contrastive distance is not a well-defined distance, which may violate the assumptions of traditional distance-based algorithms, and so cannot be used to replace distance metrics in existing algorithms without some luck.

\subsection{Positive definiteness of $C$}
Another crucial consideration of CPCA is the positive definiteness of $C$, which may be treated as a ``covariance matrix.'' However, $C$ is not necessarily positive definite unless $\gamma =0$, in which case $C=C_X$. 

Here, we derive a sufficient condition on $\gamma$ such that $C$ is positive definite. Let the eigenvalues of $C$, $C_X$, and $C_Y$ be $\lambda_1\geq \cdots\lambda_D\geq0$, $\mu_1\geq \cdots\mu_D\geq0$ and $\rho_1\geq\cdots\geq\rho_D\geq 0$, respectively.
\begin{lemma}\label{lem:PD}
$C$ is positive definite if
\begin{equation*}
    \gamma < \frac{\min\{\mu_1,\cdots,\mu_D\}}{\max\{\rho_1,\cdots,\rho_D\}}.
\end{equation*}
%Otherwise, there exists some $C$ that is not positive definite. 
\end{lemma}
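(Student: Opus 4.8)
The plan is to prove positive definiteness directly from the Rayleigh quotient, bounding each quadratic form separately. Since the eigenvalues are listed in descending order, note first that $\min\{\mu_1,\dots,\mu_D\}=\mu_D$ and $\max\{\rho_1,\dots,\rho_D\}=\rho_1$, so the hypothesis reads $\gamma<\mu_D/\rho_1$. To show $C=C_X-\gamma C_Y\succ0$ it suffices to show $v^\top C v>0$ for every unit vector $v$, i.e.\ for every $v$ with $v^\top v=1$.

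The core of the argument is two applications of the min-max (Courant--Fischer) characterization of symmetric eigenvalues. Because $C_X$ is symmetric with smallest eigenvalue $\mu_D$, the Rayleigh quotient gives the lower bound $v^\top C_X v\ge \mu_D$ for all unit $v$. Because $C_Y$ is symmetric with largest eigenvalue $\rho_1$, the same principle gives the upper bound $v^\top C_Y v\le \rho_1$ for all unit $v$. Since $\gamma\ge 0$, I would then combine these two estimates into
\begin{equation*}
 v^\top C v = v^\top C_X v-\gamma\, v^\top C_Y v \;\ge\; \mu_D-\gamma\rho_1 .
\end{equation*}

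The conclusion follows by observing that the hypothesis $\gamma<\mu_D/\rho_1$ is exactly the statement $\mu_D-\gamma\rho_1>0$ (after clearing the positive denominator $\rho_1$), so the right-hand side above is strictly positive, uniformly over all unit vectors $v$. Hence $v^\top C v>0$ for every $v\ne0$, which is the definition of positive definiteness. One small point I would verify in passing is that the stated bound is only meaningful when $\rho_1>0$; and since $\gamma\ge0$, the inequality $\gamma<\mu_D/\rho_1$ silently forces $\mu_D>0$, so $C_X$ is automatically positive definite and the denominator is nonzero.

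The truthful assessment is that there is no serious obstacle here: the result is an elementary consequence of the Rayleigh quotient, and the only thing requiring a moment's care is keeping the direction of the two inequalities straight (a lower bound on the foreground term, an upper bound on the background term) and tracking that the combined bound is \emph{uniform} in $v$, which is what upgrades the pointwise positivity into positive definiteness of $C$.
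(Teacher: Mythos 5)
Your proof is correct and takes essentially the same approach as the paper: both bound $v^\top C_X v$ below by its smallest eigenvalue and $v^\top C_Y v$ above by its largest via the Rayleigh quotient, yielding $v^\top C v \ge \mu_D - \gamma\rho_1 > 0$. The paper additionally remarks (outside the lemma's claim) that the bound is attained in the worst case where the relevant eigenvectors coincide, but for the stated sufficiency your argument matches the paper's.
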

The proof can be found in Appendix \ref{pf:PD}.
However, in CPCA, the positive definiteness of $C$ is not strictly required since the target is a $d\ll D$ dimensional subspace, and $D$ is large for high-dimensional data, such as biomedical data. Instead, CPCA only requires that the first $d$ eigenvalues of $C$ must be positive. In many applications for visualization and clustering, $d=2$~\citep{abid2018exploring}, which allows $\gamma$ to be defined over a wide range.

The following theorem provides a necessary and sufficient condition for the first $d$ eigenvalues of $C$ being positive, with Lemma \ref{lem:PD} as a special case when $d=D$.
\begin{theorem}\label{thm:Weyl}
The first $d$ eigenvalues of $C$ are positive if
\begin{equation*}
\gamma < \max\left\{\frac{\mu_d}{\rho_{1}},\frac{\mu_{d+1}}{\rho_{2}},\cdots,\frac{\mu_D}{\rho_{D-d+1}}\right\}.
\end{equation*}
Otherwise, there exists a $C$ such that the $d$th eigenvalue is negative. That is, the upper bound is tight.
\end{theorem}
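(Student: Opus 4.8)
The plan is to read $C = C_X - \gamma C_Y$ as a perturbation of the symmetric matrix $C_X$ by $-\gamma C_Y$ and to bound $\lambda_d(C)$ using Weyl's inequalities for eigenvalues of sums of symmetric matrices. Since $C_X$, $-\gamma C_Y$, and $C$ are all real symmetric, I would invoke the lower Weyl bound $\lambda_{j+k-D}(A+B) \ge \lambda_j(A) + \lambda_k(B)$, valid whenever $j+k \ge D+1$ with eigenvalues listed in nonincreasing order, applied with $A = C_X$ and $B = -\gamma C_Y$. The one bookkeeping point is that multiplying $C_Y$ by $-\gamma < 0$ reverses the order of its eigenvalues, so the $k$th largest eigenvalue of $-\gamma C_Y$ is $-\gamma\rho_{D-k+1}$.

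First I would establish the sufficient condition. Applying the lower Weyl bound with $j + k = D + d$ gives, for every $j \in \{d, d+1, \ldots, D\}$,
\begin{equation*}
\lambda_d(C) \ge \mu_j - \gamma\rho_{j-d+1},
\end{equation*}
and therefore $\lambda_d(C) \ge \max_{d \le j \le D}(\mu_j - \gamma\rho_{j-d+1})$. If $\gamma < \max\{\mu_d/\rho_1, \ldots, \mu_D/\rho_{D-d+1}\}$, then some index $j$ makes $\mu_j - \gamma\rho_{j-d+1} > 0$, so $\lambda_d(C) > 0$; since the eigenvalues are ordered, $\lambda_1, \ldots, \lambda_{d-1}$ are positive as well. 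This already contains Lemma \ref{lem:PD} as the case $d = D$, where the right-hand side collapses to $\mu_D/\rho_1 = \min_i\mu_i / \max_i\rho_i$.

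For tightness I would exhibit matrices attaining equality in Weyl's bound by making $C_X$ and $C_Y$ simultaneously diagonalizable. Fix an orthonormal basis $e_1, \ldots, e_D$ and let $C_X, C_Y$ be diagonal in it, pairing the eigenvalues so that $\mu_j$ shares an eigenvector with $\rho_{j-d+1}$ for $j = d, \ldots, D$, and pairing the leftover $\mu_1, \ldots, \mu_{d-1}$ with $\rho_{D-d+2}, \ldots, \rho_D$ arbitrarily. Then $C$ is diagonal with the $D-d+1$ entries $\mu_j - \gamma\rho_{j-d+1}$ (for $j = d, \ldots, D$) together with $d-1$ further entries. When $\gamma$ exceeds $\gamma^* := \max\{\mu_d/\rho_1, \ldots, \mu_D/\rho_{D-d+1}\}$, all $D-d+1$ entries of the first type are strictly negative, so at most $d-1$ eigenvalues of $C$ are nonnegative and hence $\lambda_d(C) < 0$. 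This shows the threshold cannot be raised.

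I expect the main obstacle to be organizational rather than deep: getting the index arithmetic in Weyl's inequality exactly right after the order-reversal caused by the negative factor $-\gamma$, and then selecting the pairing in the tightness construction that simultaneously forces $D-d+1$ diagonal entries to be negative. A minor boundary subtlety is the case $\gamma = \gamma^*$, where the construction yields $\lambda_d(C) = 0$ rather than a strictly negative value; this is consistent with the strict inequality in the sufficient condition, and I would remark on it to explain why $\gamma^*$ is the exact cutoff. Degenerate ratios with $\rho_{j-d+1} = 0$ (read as $+\infty$) should also be noted, since they only make the sufficient condition easier to satisfy.
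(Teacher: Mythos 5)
Your proof is correct and follows essentially the same route as the paper's: the sufficient condition comes from Weyl's lower inequality $\lambda_{j+k-D}(A+B)\ge\lambda_j(A)+\lambda_k(B)$ applied to $C=C_X+(-\gamma C_Y)$ with the order-reversed eigenvalues $-\gamma\rho_{D-k+1}$, and tightness from a worst-case alignment of the eigenvectors of $C_X$ and $C_Y$. Your write-up is in fact more explicit than the paper's (which only sketches the tightness construction by analogy with Lemma~\ref{lem:PD}), and your remarks on the boundary case $\gamma=\gamma^*$ and on zero denominators are sensible additions rather than deviations.
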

The proof can be found in Appendix \ref{pf:Weyl}.

\begin{corollary}\label{cly:loss}
For a fixed $d$, a larger $\gamma$ corresponds to a smaller loss. For a fixed $\gamma$, the loss will decrease when $d$ is increased to $d+1$ if $\gamma<\max\left\{\frac{\mu_{d+1}}{\rho_{1}},\frac{\mu_{d+2}}{\rho_{2}},\cdots,\frac{\mu_D}{\rho_{D-d}}\right\}.$
\end{corollary}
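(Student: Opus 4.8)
The plan is to first derive a closed-form expression for the optimal (minimized) loss as a function of $d$ and $\gamma$, and then read off both monotonicity claims directly from that formula. Starting from the geometric objective of \autoref{thm:CPCAloss}, extended to a $d$-dimensional projection $V\in\RR^{D\times d}$ with $V^\top V=\Id_d$, I would use the identity $\|x_i-VV^\top x_i\|^2=\|x_i\|^2-\|V^\top x_i\|^2$ (valid because $VV^\top$ is an orthogonal projection) to rewrite the loss as
\[
L(V)=\tr(C_X)-\gamma\tr(C_Y)-\tr\!\big(V^\top C V\big),\qquad C=C_X-\gamma C_Y.
\]
Minimizing $L$ over $V$ is therefore equivalent to maximizing $\tr(V^\top C V)$, whose optimum is the sum of the $d$ largest eigenvalues of $C$ by the Ky Fan (extremal trace) theorem. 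Since $\tr(C_X)-\gamma\tr(C_Y)=\tr(C)=\sum_{i=1}^D\lambda_i$, the optimal loss collapses to the clean form
\[
L^*(d,\gamma)=\sum_{i=1}^D\lambda_i-\sum_{i=1}^d\lambda_i=\sum_{i=d+1}^D\lambda_i,
\]
that is, the sum of the $D-d$ smallest eigenvalues of $C$.

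With this formula in hand, the second claim is immediate: $L^*(d,\gamma)-L^*(d+1,\gamma)=\lambda_{d+1}$, so the loss strictly decreases when passing from $d$ to $d+1$ exactly when $\lambda_{d+1}>0$. I would then invoke \autoref{thm:Weyl} with $d$ replaced by $d+1$: its hypothesis guarantees that the first $d+1$ eigenvalues of $C$ are positive precisely when $\gamma<\max\{\mu_{d+1}/\rho_1,\dots,\mu_D/\rho_{D-d}\}$, which is exactly the stated condition and forces $\lambda_{d+1}>0$.

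For the first claim (monotonicity in $\gamma$ at fixed $d$), I would avoid differentiating eigenvalues and instead use the dual variational characterization $L^*(d,\gamma)=\min_{W^\top W=\Id_{D-d}}\tr(W^\top C(\gamma)W)$. For each fixed $W$ the map $\gamma\mapsto\tr(W^\top C_X W)-\gamma\,\tr(W^\top C_Y W)$ is affine with slope $-\tr(W^\top C_Y W)\le 0$, hence non-increasing; and a pointwise minimum of non-increasing functions is non-increasing, so $L^*(d,\gamma)$ decreases in $\gamma$ (strictly when $C_Y$ is nonsingular, so that the minimizing $W$ satisfies $\tr(W^\top C_Y W)>0$).

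The only genuine subtlety I anticipate is bookkeeping rather than mathematics: verifying that the eigenvalue-index shift in \autoref{thm:Weyl} lands on exactly the claimed threshold by checking the endpoints $\mu_{d+1}/\rho_1$ and $\mu_D/\rho_{D-d}$, and being explicit that the extremal-trace identities hold verbatim even though $C$ need not be positive semidefinite, since Ky Fan's theorem imposes no sign assumption on the eigenvalues. Once the closed form for $L^*(d,\gamma)$ is established, both parts of the statement are essentially corollaries of it together with \autoref{thm:Weyl}.
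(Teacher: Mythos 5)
Your proof is correct and follows essentially the same route as the paper: both reduce the optimal loss to the eigenvalue tail sum $\sum_{i=d+1}^D\lambda_i$ and obtain the $d\to d+1$ claim by applying Theorem~\ref{thm:Weyl} at $d+1$ to get $\lambda_{d+1}>0$. The only minor difference is the $\gamma$-monotonicity step, where you use the min-trace variational characterization of the smallest $D-d$ eigenvalues of $C(\gamma)$ rather than the paper's argument via eigenvalue monotonicity under a positive semidefinite shift; the two are interchangeable, and your version cleanly sidesteps the sign slips in the paper's write-up of that step.
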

The proof can be found in Appendix \ref{pf:loss}. The above corollary explains why, when $\gamma$ is large, a smaller $d$ is preferable: when $\gamma$ is large enough (such that $\lambda_{3}<0$), CPCA with $d=2$ is better than higher dimensional CPCA in terms of mean squared error (MSE). See Section \ref{sec:experiments} for more details.

\subsection{The tuning parameter $\gamma$}
In CPCA, the tuning parameter $\gamma$ can be any non-negative real number, making it difficult to tune. Although a tuning method was suggested in the original CPCA proposal~\citep{abid2018exploring}, the procedure depends on an almost exhaustive search, making it inefficient. We first analyze the role of $\gamma$ and propose a new parameterization such that the new tuning parameter $\gamma$ always lies in a small range, typically close to $[0,1]$, making it easier to tune.

Recall that, for PCA, minimizing the sum of squared error and minimizing the mean squared error are equivalent, since the scale $\frac{1}{n}$ only changes the eigenvalues of the sample covariance, not its eigenvectors:
\begin{equation*}
    \underset{v^\top v=1}{\argmin} \sum_{i=1}^n \|x_i-vv^\top x_i\|^2=\underset{v^\top v=1}{\argmin}~\frac{1}{n} \sum_{i=1}^n \|x_i-vv^\top x_i\|^2.
\end{equation*}
However, in the contrastive setting, the scale matters. Specifically, the following two optimization problems are not equivalent unless $m=n$, which rarely happens in practice:
\begin{align}
    &\underset{v^\top v=1}{\argmin} \sum_{i=1}^n \|x_i-vv^\top x_i\|^2-\gamma' \sum_{i=1}^m \|y_i-vv^\top y_i\|^2 \label{eqn:CPCAl22}\\
    &\hspace{-0.6cm}\neq \underset{v^\top v=1}{\argmin}~ \frac{1}{n}\sum_{i=1}^n \|x_i-vv^\top x_i\|^2-\frac{1}{m}\gamma' \sum_{i=1}^m \|y_i-vv^\top y_i\|^2.\nonumber
\end{align}
Comparing Equations \eqref{eqn:CPCAl2} and \eqref{eqn:CPCAl22}, we conclude that they are equivalent when $\gamma'=\gamma\frac{n}{m}$. 
If the sample sizes of the two groups are not the same, then the choice of $\gamma$ is different from the choice of $\gamma'$. We will show that this reparameterization, which is adjusted by the relative sample size, makes $\gamma$ more interpretable and easier to tune.

\section{Probabilistic CPCA (PCPCA)}
In this section, we present a probabilistic approach to contrastive learning. First, we present adjacent work on contrastive learning performed through probabilistic modeling. Then, we present our model, PCPCA, and analyze it through the lens of CPCA, PPCA, and PCA.

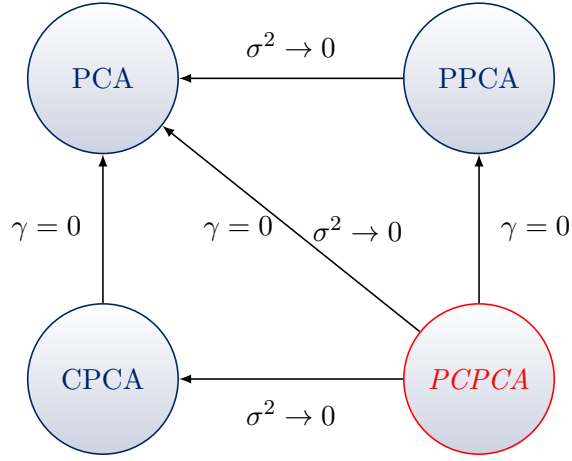
\begin{figure}[h!]
\begin {center}
\begin {tikzpicture}[-latex ,auto ,node distance =4 cm and 5cm ,on grid ,
semithick ,
state/.style ={ circle ,top color =white , bottom color = dukeblue!20 ,
draw,dukeblue , text=dukeblue , minimum width =1 cm}]
\node[state] (C)[minimum size=2cm] {CPCA};
\node[state,color=red] (D) [right = of C, minimum size=2cm,color=red]{ \emph{PCPCA}};
\node[state] (A) [above =of C,minimum size=2cm] {PCA};
\node[state] (B) [above  =of D,minimum size=2cm] {PPCA};
\path (C) edge [] node[left =0.15 cm] {$\gamma=0$} (A);
\path (B) edge [] node[above =0.15 cm] {$\sigma^2\to0$} (A);
\path (D) edge [] node[right =0.15 cm] {$\gamma=0$} (B);
\path (D) edge [] node[below =0.15 cm] {$\sigma^2\to0$} (C);
\path (D) edge [] node[right=0.15cm] {$\sigma^2\to0$} node[left=0.1cm]{$\gamma=0$} (A);
\end{tikzpicture}
\caption{Target commutative diagram for PCA family.}
\label{fig:diagram}
\end{center}
\end{figure}

\subsection{Contrastive latent variable model}

As a slightly different model than CPCA, the contrastive latent variable model (CLVM) has been proposed~\citep{severson2019unsupervised}:
\begin{equation}\label{eqn:CLVM}
    z\sim N(0,\Id_d),~t\sim N(0,\Id_{d'}),~ x=Sz+Wt+\epsilon_x,~y=Sz+\epsilon_y,~\epsilon_x,\epsilon_y\sim N(0,\sigma^2\Id_D).
\end{equation}
The marginals are given by
\begin{equation*}
    x\sim N(0,SS^\top+WW^\top+\sigma^2\Id_D),~~y\sim N(0,SS^\top+\sigma^2\Id_D).
\end{equation*}

The objective function in inference is the likelihood or log likelihood.
When the background dimension is zero, that is, when $d'=0$, the above model becomes PPCA for $X\cup Y$.

However, we are interested in characterizing the foreground data $X$ while controlling for variation in the background $Y$. For this reason, it is more desirable to recover PPCA for $X$ as a special case of the model rather than PPCA for $X \cup Y$ or $Y$. Recall that similar a relation holds for PCA and CPCA: when $\gamma =0$, CPCA becomes PCA on $X$ only. In fact, there does not exist any $\gamma$ such that CPCA is equivalent to PCA on $X\cup Y$.

In addition, there is no clear link between the CLVM (\autoref{eqn:CLVM}) and CPCA, even if $\sigma^2\to 0$. As a result, it is of interest to develop a general model that is simultaneously a probabilistic version of CPCA and a contrastive version of PPCA.

\subsection{PCPCA model}
Consider the following model 
\begin{equation}
z_x,z_y\sim N(0,\Id_d),~x=Wz_x+\epsilon_x,~y=Wz_y+\epsilon_y,
\end{equation}
where $\epsilon_x,~\epsilon_y\sim N(0,\sigma^2\Id_D)$ are i.i.d. Gaussian noise vectors. Recall the equivalent statistical and geometric interpretations of PPCA: maximizing the likelihood of $X$ is equivalent to minimizing the distance between $X$ and the hyperplane $W$.
For CPCA, we expect such $W$ to be far away from the background $Y$, which is exactly the (geometric) objective of CPCA (see \autoref{thm:CPCAloss}). 
In the probabilistic setting, maximizing the distance from $Y$ is equivalent to minimizing the likelihood of $Y$. This is counter-intuitive, but it coincides with our model's motivation to account for variation in the foreground data, not the background data. 
Thus, we have the following objective function
\begin{equation}\label{eqn:PCPCA}
\underset{W,\sigma^2}{\argmax} ~\frac{ p(X|W,\sigma^2)}{p(Y|W,\sigma^2)^\gamma}.
\end{equation}
The above objective function becomes the PPCA objective function when $\gamma=0$, and a relative likelihood when $\gamma=1$. For general $\gamma\in[0,\infty)$, we refer to Equation \ref{eqn:PCPCA} as the relative likelihood, as it captures the likelihood of the foreground data with respect to the likelihood of the background data, scaled by $\gamma$. 

The non-traditional nature of this objective requires further comment. Notice that this objective is not a traditional likelihood ratio, which is typically defined as a ratio of the likelihood of one dataset under two different parameter settings. Rather, ours is a ratio of likelihoods of two different datasets under a shared parameter setting. Furthermore, Equation \eqref{eqn:PCPCA} is not a well-defined likelihood unless $\gamma=0$. These caveats preclude the use of traditional estimation and inference procedures based on likelihoods and relative likelihoods. For this reason, we present alternative procedures: one based on a direct maximization of Equation \eqref{eqn:PCPCA} and another based on generalized posterior inference. 

First, we investigate the closed-form solution for this objective.
\begin{theorem}\label{thm:PCPCA}
The $W$ and $\sigma^2$ that maximize Equation \eqref{eqn:PCPCA}, denoted by $\widehat{W}_{ML}$, $\widehat{\sigma}^2_{ML}$, are given by
\begin{eqnarray*}
\widehat{\sigma}^2_{ML} &=&\frac{1}{(n-\gamma m )(D-d)}\sum_{i=d+1}^D\lambda_i\\
    \widehat{W}_{ML} &=& U_d\left(\frac{\Lambda_d}{n-\gamma m}-\widehat{\sigma}^2_{ML}\Id_d\right)^{1/2}R,
\end{eqnarray*}
where $U_d$ consists of the first $d$ eigenvectors of $C=\sum_{i=1}^nx_ix_i^\top-\gamma \sum_{j=1}^m y_jy_j^\top$, $\Lambda_d=\diag\{\lambda_1,\cdots,\lambda_d\}$ contains the corresponding eigenvalues, and $R$ is any $d$ by $d$ rotation matrix. Moreover, $\widehat{W}_{ML}$ is equivalent to the $\widehat{W}_{PPCA}$ that maximizes the PPCA objective when $\gamma=0$, and is equivalent to the $\widehat{W}_{CPCA}$ that maximizes the CPCA objective as $\sigma^2\to0$.
\end{theorem}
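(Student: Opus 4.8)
The plan is to reduce the relative-likelihood maximization to a single scaled probabilistic-PCA problem, so that the stated formulas can be read off from the PPCA solution reviewed above. Write $M = WW^\top + \sigma^2\Id_D$ and take the logarithm of the objective in Equation~\eqref{eqn:PCPCA}. Since each $x_i$ and each $y_j$ is $N(0,M)$, the log-objective is
\[
\mathcal{L}(W,\sigma^2) = \log p(X\mid W,\sigma^2) - \gamma\log p(Y\mid W,\sigma^2),
\]
and collecting the Gaussian normalizers and quadratic forms gives
\[
\mathcal{L} = -\tfrac{(n-\gamma m)D}{2}\log(2\pi) - \tfrac{n-\gamma m}{2}\log|M| - \tfrac12\Big(\textstyle\sum_i x_i^\top M^{-1}x_i - \gamma\sum_j y_j^\top M^{-1}y_j\Big).
\]

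First I would apply the trace identity $v^\top M^{-1}v = \tr(M^{-1}vv^\top)$ to rewrite the bracketed term as $\tr(M^{-1}C)$, where $C = \sum_i x_ix_i^\top - \gamma\sum_j y_jy_j^\top$ is exactly the matrix in the statement. Setting $N = n-\gamma m$ and $\tilde S = C/N$, the log-objective becomes
\[
\mathcal{L} = -\tfrac N2\big(D\log(2\pi) + \log|M| + \tr(M^{-1}\tilde S)\big),
\]
which is, term for term, the PPCA log-likelihood for a sample of ``effective size'' $N$ with empirical covariance $\tilde S$. The estimators then follow by substituting the eigendecomposition of $\tilde S$: since the eigenvalues of $\tilde S$ are $\lambda_i/N$ with the $\lambda_i$ the eigenvalues of $C$, the PPCA formulas $\hat\sigma^2 = \frac{1}{D-d}\sum_{i>d}(\lambda_i/N)$ and $\hat W = U_d(\Lambda_d/N - \hat\sigma^2\Id_d)^{1/2}R$ reproduce precisely the claimed expressions.

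To justify this substitution rigorously, rather than merely invoking PPCA (whose derivation presupposes a genuine positive-definite covariance and a positive sample size), I would reproduce the stationarity computation. Differentiating $\mathcal{L}$ in $W$ via $\partial_W\log|M| = 2M^{-1}W$ and $\partial_W\tr(M^{-1}C) = -2M^{-1}CM^{-1}W$ yields the critical-point equation $CM^{-1}W = NW$, i.e.\ $\tilde S M^{-1}W = W$, which is exactly the PPCA stationarity equation; its solutions are matrices $W$ whose columns are scaled eigenvectors of $C$, and the scalar equation $\partial_{\sigma^2}\mathcal{L} = 0$ fixes $\hat\sigma^2$ as the average of the discarded eigenvalues. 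The special cases are then immediate: at $\gamma = 0$ we have $N = n$ and $C = \sum_i x_ix_i^\top$, so the formulas collapse to the PPCA MLE, while as $\sigma^2\to 0$ the factor $(\Lambda_d/N - \hat\sigma^2\Id_d)^{1/2}$ tends to $(\Lambda_d/N)^{1/2}$ and the column space of $\hat W$ converges to the top-$d$ eigenspace of $C$, which is the CPCA solution.

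The hard part will be the global-optimality claim, not the reduction. Unlike in PPCA, the effective covariance $\tilde S$ need not be positive definite --- $C$ can have negative eigenvalues once $\gamma$ is large --- and the effective sample size $N = n-\gamma m$ can change sign, so the sign of the prefactor $-N/2$ matters. I would therefore verify both that the problem stays well-posed (the square root requires $\lambda_d/N \geq \hat\sigma^2$ and $N > 0$) and that the stationary point built from the top-$d$ eigenvectors is a genuine maximizer rather than a saddle, by evaluating $\mathcal{L}$ at each eigenvector-type stationary point and comparing, mirroring the Tipping--Bishop ordering argument but tracking eigenvalue signs explicitly. The earlier results on positive definiteness (Lemma~\ref{lem:PD} and the Weyl-type Theorem~\ref{thm:Weyl}) supply exactly the conditions under which the relevant eigenvalues of $C$ are positive, so they feed directly into this step.
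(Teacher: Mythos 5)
Your proposal follows essentially the same route as the paper's proof: take the log of the relative likelihood, collapse it to $-\tfrac{n-\gamma m}{2}\log|A| - \tfrac12\tr(A^{-1}C)$, solve the stationarity equation $CA^{-1}W = (n-\gamma m)W$ via the eigendecomposition of $C$, plug back in to get $\widehat{\sigma}^2_{ML}$, and read off the PPCA ($\gamma=0$) and CPCA ($\sigma^2\to0$) limits. Your closing remarks on well-posedness ($n-\gamma m>0$, positivity of the trailing eigenvalues) and on verifying that the top-$d$ stationary point is a genuine maximizer go slightly beyond what the paper's proof actually carries out --- the paper relegates those conditions to a remark and does not rule out saddle points --- but the core argument is the same.
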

The proof can be found in Appendix \ref{pf:PCPCA}. As a result, we find last missing piece in the commutative diagram \ref{fig:diagram}, namely, we have a complete algorithm for PCPCA that allows PCA, PPCA, and CPCA to be recovered as subcases of this general framework. 

\begin{remark}
The above PCPCA solution highlights two hidden assumptions for PCPCA:
\begin{enumerate}
    \item $n-\gamma m>0$ so that $\widehat{W}_{ML}$ is well defined, that is, $\gamma <\frac{n}{m}$.
    \item $\sum_{i=d+1}^D\lambda_i>0$ so that $\widehat{\sigma}^2_{ML}>0$. By \autoref{thm:Weyl}, a sufficient condition is $\gamma<\frac{\sum_{i=d+1}^D\mu_i}{(D-d)\rho_1}$.
\end{enumerate}
These seemingly strong constraints restrict the range of $\gamma$ from $[0,\infty)$ to a small interval, often a subinterval of $[0,1]$. This more restricted interval makes PCPCA easier to tune than CPCA. In addition, the performance of PCPCA is robust to the choice of $\gamma$ within this interval, which is not observed for CPCA. 
\end{remark}

\section{Generalized Bayesian approach}
Next, we present a generalized Bayesian framework for performing posterior inference in the PCPCA model. Recall that our objective (\autoref{eqn:PCPCA}) is not a likelihood, so we cannot simply place a prior on $\theta = (W, \sigma^2)$ and perform Bayesian inference in the traditional fashion. For this reason, we leverage more general loss-based inference methods based on Gibbs posteriors.

\subsection{Gibbs Posterior}
Let $\Theta$ be the space of parameters, which can be a finite or infinite dimensional space, and $U$ be the feature space, then we denote the loss function $l: U\times \Theta\to \RR$ and $l_\theta:U\to \RR$. For a given measure $P$ on $U$ (often the true measure), define the risk function $R:\Theta\to \RR$ to be 
$$R:\Theta\to \RR,~\theta\mapsto\EE_P[l_\theta].$$ 
Then the goal is to minimize the risk: $\min_{\theta\in\Theta} R(\theta)$. It is common to assume the minimizer is unique, denoted by $\theta^*=\underset{\theta\in\Theta}{\arg\min}~R(\theta)$.

However, $P$ is often unknown. Instead, we have observations $u_1,\cdots,u_n$ and we have the corresponding empirical measure $P_n=\frac{1}{n}\sum_{i=1}^n\delta_{u_i}$. Then the empirical risk function is
$$R_n:\Theta\to \RR,~\theta\mapsto \EE_{P_n}[l_\theta].$$
The goal in this more practical setting is to minimize the empirical risk:
\[\min_{\theta\in\Theta} R_n(\theta).\]
Let $\widehat{\theta}_n=\underset{\theta\in\Theta}{\arg\min}~R_n(\theta)$ be the unique minimizer of the empirical risk. 

If a statistical model exists with density function $p_\theta$ and the loss function is $l_\theta(u)=-\log p_\theta(u)$, then this minimization reduces to maximum likelihood estimation. In particular, in this setting, $R_n(\theta)$ is the negative log likelihood and $\widehat{\theta}_n$ is the MLE.
\begin{definition}
Given a prior $\Pi$ on $\Theta$, the Gibbs posterior is defined as
$$\Pi_n(d\theta)\propto e^{-wnR_n(\theta)}\Pi(d\theta),~\theta\in\Theta,$$
where $\Pi$ is the prior and $w>0$ is the learning rate.
\end{definition}
The Gibbs posterior becomes the true posterior if $R_n$ is the negative log-likelihood. 

\begin{definition}
The Gibbs posterior $\Pi_n$ asymptotically concentrates around $\theta^*$ at the rate $\varepsilon_n\to0$ w.r.t. divergence measure $d$ on $\Theta$ if
\[ \EE_{P_n}\Pi_n\left(\{\theta:d(\theta,\theta^*)>M\varepsilon_n\}\right)\to 0, \]
where $M$ is a constant.
\end{definition}
Note that $d$ is only required to be positive semi-definite, that is, $d(\theta;\theta')\geq 0$ with equality iff $\theta=\theta'$. %One of the most popular divergences in Bayesian setting is the KL divergence. 

\subsection{Gibbs posterior for CPCA}
We next propose a loss function and Gibbs posterior for CPCA. Let $u=(x,\alpha)$ where $x\in\RR^D$ is the observation, $\alpha\in\{0,1\}$ with $\alpha=0$ represents foreground data while $\alpha=1$ represents background data and $\theta=V\in\Gr(D,d)$, the Grassmannian manifold. Let the loss function be $$l_\theta(u) = (-\gamma)^{\alpha}d^2(x,V)=(-\gamma)^{\alpha}\|x-VV^\top x\|^2=(-\gamma)^{\alpha}\left(x^\top x-xVV^\top x\right).$$ For simplicity, assume $u_1,\cdots,u_n$ are foreground data while $u_{n+1},\cdots,u_{n+m}$ are background data. As a result, the empirical risk function is
\begin{align*}
R_n(\theta)&=\frac{1}{n+m}\sum_{i=1}^{n+m} l_\theta(u_i)\\
&=\frac{1}{n+m} \left(\sum_{i=1}^n \left(x_i^\top x_i-x_i^\top VV^\top x_i\right)-\gamma\sum_{j=1}^m \left( x_{n+j}^\top x_{n+j}-x_{n+j}^\top VV^\top x_{n+j}\right)\right)\\
&= -\sum_{i=1}^n x_i^\top VV^\top x_i+\gamma\sum_{j=1}^m x_{n+j}^\top VV^\top x_{n+j}+M\\
&= -\tr(VV^\top C)+M,
\end{align*}
where $C=\sum_{i=1}^nx_ix_i^\top-\gamma \sum_{j=1}^mx_{n+j}x_{n+j}^\top$ and $M$ is independent of $\theta$. 
We conclude that the empirical risk function coincides with the objective function of CPCA. Furthermore, if the prior $\Pi$ is chosen to be the uniform prior, then the maximum a posteriori estimation (MAP) matches the solution of CPCA, which is the subspace spanned by the first $d$ eigenvectors of $C$.  

For the population version of the risk, assume $P=\beta P_F+(1-\beta)P_B$ where $\beta\in(0,1)$, $P_F$ is the foreground measure with zero mean and covariance $C_F$, and $P_B$ is the background measure with zero mean and covariance $C_B$. Then the risk function is
\begin{align*}
R(\theta)&=\EE_Pl_\theta(u)\\
&=\beta \EE_{x\sim P_F} \left(x^\top x-x^\top VV^\top x\right)-(1-\beta)\gamma\EE_{x\sim P_B}\left(x^\top x-x^\top VV^\top x\right)\\
&= -\beta\tr(VV^\top C_F)-(1-\beta)\gamma\tr(VV^\top C_B)+M\\
&= -\tr(VV^\top C)+M,
\end{align*}
where $C=\beta C_F-(1-\beta)\gamma C_B$ and $M$ is independent of $\theta$. So the minimizer is given by
$$\theta^*=V^*=[\eig_1(C),\cdots,\eig_d(C)].$$

We consider the risk divergence $d(\theta;\theta^*)=(R(\theta)-R(\theta^*))^{1/2}=\tr((V^*V^{*\top}-VV^\top)C)^{1/2}$, which measures the difference between risks. We now consider the contraction rate of this Gibbs posterior.
\begin{theorem}\label{thm:CPCArate}
Assume $P=\beta N(0,C_F)+(1-\beta)N(0,C_B)$ and let the prior $\Pi$ be uniform on $\Gr(D,d)$ w.r.t. the Haar measure, then the Gibbs posterior $\Pi_n$ asymptotically contracts to $\theta^*$ w.r.t.  $d$ at rate $n^{-1/2}$.
\end{theorem}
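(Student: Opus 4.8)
The plan is to bound the Gibbs posterior mass of $A=\{\theta: d(\theta,\theta^*) > M n^{-1/2}\}$ via the standard evidence-ratio decomposition
\[
\Pi_n(A)=\frac{\int_A e^{-wN(R_n(\theta)-R_n(\theta^*))}\,\Pi(d\theta)}{\int_\Theta e^{-wN(R_n(\theta)-R_n(\theta^*))}\,\Pi(d\theta)},
\]
where $N=n+m$ is the total sample size (I take $m\asymp n$, consistent with a fixed mixing weight $\beta$, so rates in $N$ and $n$ agree) and I have centered at $\theta^*$ so that the $\theta$-free constant in $R_n$ cancels. The single most useful structural fact, inherited from the CPCA risk computation, is that $R_n$ and $R$ are \emph{linear} in the projector $P_V:=VV^\top$: writing $\widehat C$ for the normalized empirical contrastive covariance with $\EE\widehat C=C=\beta C_F-(1-\beta)\gamma C_B$, one has $R_n(\theta)-R_n(\theta^*)=d(\theta,\theta^*)^2+G_n(\theta)$, where $d(\theta,\theta^*)^2=\tr((P_{V^*}-P_V)C)\ge 0$ is the excess risk and the centered empirical process is the \emph{single} linear functional $G_n(\theta)=-\tr\big((P_V-P_{V^*})(\widehat C-C)\big)$. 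Thus the entire empirical process over the Grassmannian is driven by the one random matrix $\widehat C-C$. I also set $p:=\dim\Gr(D,d)=d(D-d)$.

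First I would bound $\widehat C-C$. Under the Gaussian mixture $P$, $\widehat C$ is a weighted average of i.i.d.\ outer products, so a sub-exponential matrix concentration inequality (Hanson--Wright or matrix Bernstein, in fixed dimension $D$) gives $\Pr(\|\widehat C-C\|_F>KN^{-1/2})\le 2e^{-cK^2}$ for a constant $c>0$ and all $K\le\sqrt N$; call $E_N=\{\|\widehat C-C\|_F\le KN^{-1/2}\}$ this high-probability event. Next, since $\theta^*$ is assumed unique, $C$ has a spectral gap $\lambda_d(C)>\lambda_{d+1}(C)$, and a Davis--Kahan/Weyl argument yields the two-sided comparison $d(\theta,\theta^*)^2=\tr((P_{V^*}-P_V)C)\asymp\|P_V-P_{V^*}\|_F^2$, with constants depending only on the gap and $\lambda_1(C)$. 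Combining Cauchy--Schwarz with these two facts localizes the empirical process: on $E_N$,
\[
|G_n(\theta)|\le\|P_V-P_{V^*}\|_F\,\|\widehat C-C\|_F\lesssim KN^{-1/2}\,d(\theta,\theta^*).
\]
This localized bound (proportional to the divergence, not merely uniform) is what delivers the fast $n^{-1/2}$ rate.

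With this in hand the two integrals are routine. For the numerator, on $E_N$ and on $A$ I have $R_n-R_n^*\ge d^2-cKN^{-1/2}d\ge\tfrac12 d^2$ once $M$ exceeds a fixed multiple of $K$, so the exponent is at most $-\tfrac{w}{2}Nd^2$. Peeling $A$ into dyadic shells $\{2^jMN^{-1/2}<d\le 2^{j+1}MN^{-1/2}\}$ and using that the Haar volume of a ball of radius $\rho$ on $\Gr(D,d)$ is $\asymp\rho^{p}$, the geometric sum is dominated by its first term and gives $\mathrm{Num}\lesssim M^{p}N^{-p/2}e^{-wM^2/2}$. For the denominator I restrict the integral to the ball $\{d\le N^{-1/2}\}$, where on $E_N$ the exponent is $O(1)$, so $\mathrm{Den}\gtrsim e^{-O(1)}\,\Pi(\{d\le N^{-1/2}\})\gtrsim N^{-p/2}$. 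Hence $\Pi_n(A)\lesssim M^{p}e^{-wM^2/2}$ on $E_N$, and taking $\EE_{P_n}$ together with $\Pr(E_N^c)\le 2e^{-cK^2}$ and $K\asymp M$ yields a bound $\limsup_N\EE_{P_n}\Pi_n(A)\le h(M)$ free of $N$, with $h(M)\to 0$ as $M\to\infty$. This is precisely contraction at rate $\varepsilon_N=N^{-1/2}\asymp n^{-1/2}$.

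The hard part is obtaining the \emph{localized} fluctuation bound $|G_n(\theta)|\lesssim N^{-1/2}d(\theta,\theta^*)$ rather than the crude uniform bound $\sup_\theta|G_n|=O(N^{-1/2})$; the latter is useless here because on $A$ the excess risk is only of order $N^{-1}\ll N^{-1/2}$, so a uniform $N^{-1/2}$ error would swamp the signal and destroy the rate. Securing the localized bound is exactly where the two ingredients must be combined: the linearity of the loss in $P_V$ (which collapses the empirical process to a single matrix) and the spectral-gap comparison $d^2\asymp\|P_V-P_{V^*}\|_F^2$ (which turns the merely positive semi-definite divergence into a genuine metric in which prior volumes scale like $\rho^{p}$). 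A secondary technical point is that covariance fluctuations are only sub-exponential, so the event $E_N$ and the expectation $\EE_{P_n}$ must be handled through the tail inequality above, and the conclusion is naturally read in the double-limit sense ($\limsup_N$ followed by $M\to\infty$), matching the constant-$M$ formulation of the contraction definition.
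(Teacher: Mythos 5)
Your argument is correct in outline, but it takes a genuinely different route from the paper's. The paper does not prove contraction from first principles: it verifies the hypotheses of a general Gibbs-posterior contraction result of Syring and Martin (Lemma~\ref{lem:syring2020gibbs}) --- that the loss is of sub-exponential type with respect to the risk divergence $d$, and that the uniform Haar prior assigns mass $\gtrsim (n^{-1/2})^{d(D-d)}$ to the neighborhood $\{\theta : m(\theta,\theta^*)\vee v(\theta,\theta^*)\le \varepsilon_n\}$ --- and the bulk of its work is an explicit computation of the variance term $v(\theta,\theta^*)$ via Gaussian fourth-moment identities, followed by the Grassmannian ball-volume estimate $\mathcal{H}(B(V,r))\asymp r^{d(D-d)}$. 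You instead reprove the contraction directly: the evidence-ratio decomposition, the matrix-concentration bound on $\widehat{C}-C$, the Cauchy--Schwarz localization $|G_n(\theta)|\lesssim N^{-1/2}\,d(\theta,\theta^*)$, and the dyadic peeling together replace the appeal to Syring--Martin. The two arguments share their structural core --- linearity of the risk in $VV^\top$, a two-sided comparison between the risk divergence and $\|VV^\top-V^*V^{*\top}\|$, and the $r^{d(D-d)}$ prior-mass scaling --- but your version makes explicit two points the paper leaves implicit: the eigengap $\lambda_d(C)>\lambda_{d+1}(C)$ required for the comparison between $d(\theta,\theta^*)^2$ and $\|VV^\top-V^*V^{*\top}\|_F^2$ (and hence for uniqueness of $\theta^*$), and the localization of the empirical process, which in the paper is absorbed into the sub-exponential-type condition and the $v$ computation. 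What you lose is generality --- the paper's route transfers to the PCPCA Gibbs posterior of Theorem~\ref{thm:PCPCArate} simply by recomputing $m$ and $v$ --- and what you gain is a self-contained proof in which the assumptions and constants are visible.
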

The proof can be found in Appendix \ref{proof:CPCArate}. As a result, the Gibbs posterior will contract to the optimal parameter as the sample size increases, which provides theoretical support for the generalized Bayesian version of CPCA.

\subsection{Gibbs Posterior for PCPCA}
We now consider the Gibbs posterior for PCPCA. Let $u=(x,\alpha)$ where $x\in\RR^D$ is the observation and $\alpha\in\{0,1\}$ indicates the sample's condition, with $\alpha=0$ representing foreground data while $\alpha=1$ representing background data. As before, let $\theta=(W,\sigma^2)$ be the parameter. Let the loss function be $l_\theta(u) = -(-\gamma)^{\alpha}\log N(v;0,WW^\top+\sigma^2\Id_D)$. For simplicity, assume $u_1,\cdots,u_n$ are foreground data while $u_{n+1},\cdots,u_{n+m}$ are background data. As a result, the empirical risk function is
\begin{align*}
&R_n(\theta)=\frac{1}{n+m}\sum_{i=1}^{n+m} l_\theta(u_i)\\
&=\frac{1}{n+m} \left(\sum_{i=1}^n-\log N(x_i;0,WW^\top+\sigma^2\Id_d)+\gamma\sum_{j=1}^m \log N(x_{n+j};0,WW^\top+\sigma^2\Id_d)\right)\\
&= \sum_{i=1}^n \left(\frac{1}{2}\log |A|+\frac{1}{2}x_{i}^\top A^{-1}x_i\right)+\gamma \sum_{j=1}^m\left(-\frac{1}{2}\log |A|-\frac{1}{2}x_{n+j}^\top A^{-1}x_{n+j}\right)+M\\
&= \frac{n-\gamma m}{2}\log |A|+\frac{1}{2}\tr(A^{-1}C)+M,
\end{align*}
where $A = WW^\top+\sigma^2 \Id_D$, $C=\sum_{i=1}^nx_ix_i^\top-\gamma \sum_{j=1}^mx_{n+j}x_{n+j}^\top$ and $M$ is independent of $\theta$. 
We conclude that the empirical risk function coincides with the (negative log) objective function of PCPCA. Furthermore, if the prior $\Pi$ is chosen to be the uniform prior, then the maximum a posteriori (MAP) estimate matches the solution in \autoref{thm:PCPCA}. 

For the population version of the risk, assume the same model as in the previous section. Specifically, we assume $P=\beta P_F+(1-\beta)P_B$ where $\beta\in(0,1)$, $P_F$ is the foreground measure with zero mean and covariance $C_F$, and $P_B$ is the background measure with zero mean and covariance $C_B$. Then the risk function is
\begin{align}
R(\theta)&=\EE_Pl_\theta(u) \nonumber \\
&=-\beta \EE_{x\sim P_F} \log p(x|0,A)+(1-\beta)\gamma\EE_{x\sim P_B}\log p(x|0,A) \nonumber \\
&= \frac{\beta}{2}\left(\log|A|+\tr(A^{-1}C_F)\right)-\frac{(1-\beta)}{2}\gamma\left(\log|A|+\tr(A^{-1}C_B)\right)+M \nonumber \\
&= \frac{\beta-(1-\beta)\gamma}{2}\log|A|+\frac{1}{2}\tr(A^{-1}C)+M, \label{eq:pcpca_risk}
\end{align}
where $C=\beta C_F-(1-\beta)\gamma C_B$. So the minimizer is given by
\begin{equation}
    {\sigma^{*2}}=\frac{1}{D-d}\sum_{i=d+1}^D\lambda_i,W^*=U_d\left(\frac{\Lambda_d}{\beta-(1-\beta)\gamma}-\sigma^{*2}\Id\right)^{1/2}R \label{eq:risk_minizer}
\end{equation}
where $\Lambda_d=\diag\{\lambda_i\}$ consists of the largest $d$ eigenvalues of $C$, and $U_d$ consists of the corresponding $d$ eigenvectors. 
We consider the same risk divergence as in the previous section: 
$$d(\theta;\theta^*)=(R(\theta)-R(\theta^*))^{1/2}=\left(\frac{\beta-(1-\beta)\gamma}{2}(\log|A|-\log|A^*|)+\frac{\tr((A^{-1}-A^{*-1})C)}{2}\right)^{1/2}.$$
We now consider the contraction rate of the PCPCA Gibbs posterior under this divergence.

\begin{theorem}\label{thm:PCPCArate}
Assume $P=\beta N(0,C_F)+(1-\beta)N(0,C_B)$ and $\sigma^2\geq \sigma_0^2>0$.  Let the prior $\Pi$ be uniform on $\RR^{D\times d}\times [\sigma_0^2,\infty)$, then the Gibbs posterior $\Pi_n$ asymptotically contracts to $\theta^*$ w.r.t. $d$ at rate $n^{-1/2}$.
\end{theorem}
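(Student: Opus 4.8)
The plan is to mirror the argument used for the CPCA Gibbs posterior in \autoref{thm:CPCArate}, with two substantive modifications dictated by the PCPCA loss $l_\theta(u)=-(-\gamma)^\alpha\log N(x;0,A)$, $A=WW^\top+\sigma^2\Id_D$: the loss differences are now affine functions of quadratic forms in Gaussian vectors rather than pure quadratic forms, and the parameter space $\RR^{D\times d}\times[\sigma_0^2,\infty)$ is non-compact, unlike the Grassmannian in the CPCA case. Throughout I work under the standing assumption $\beta-(1-\beta)\gamma>0$ (the population analog of the constraint $n-\gamma m>0$ identified in \autoref{thm:PCPCA}), which guarantees that $\theta^*=(W^*,\sigma^{*2})$ in \eqref{eq:risk_minizer} is a genuine minimizer of the risk \eqref{eq:pcpca_risk} and that the risk grows without bound as $\norm{W}\to\infty$. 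I also note at the outset that $W$ is identifiable only up to a right factor in $\mathrm{O}(d)$, so $\theta^*$ is really the orbit $\{(W^*R,\sigma^{*2}):R\in\mathrm{O}(d)\}$; the divergence $d(\theta;\theta^*)$ depends on $W$ only through $A$ and is constant along this orbit, so contraction to the orbit is exactly what the statement asserts.

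Since $d(\theta;\theta^*)=(R(\theta)-R(\theta^*))^{1/2}$ and $\eps_n=n^{-1/2}$, the contraction event $A_n=\{d(\theta;\theta^*)>M\eps_n\}$ equals $\{R(\theta)-R(\theta^*)>M^2/n\}$, so it suffices to bound the Gibbs mass of this excess-risk set. I would write that mass as the ratio $N_n/D_n$, with $N_n=\int_{A_n}e^{-wn(R_n(\theta)-R_n(\theta^*))}\Pi(d\theta)$ and $D_n$ the corresponding integral over all of $\Theta$, and control the two pieces separately. For the denominator I use the prior-mass condition: because the risk \eqref{eq:pcpca_risk} is smooth with a regular minimum, $R(\theta)-R(\theta^*)$ is locally comparable to the squared Euclidean distance to the minimizing orbit, so a uniform prior assigns mass of polynomial order $n^{-p}$ (with $p$ half the effective dimension $Dd-\binom{d}{2}+1$) to the ball $\{R(\theta)-R(\theta^*)\le 1/n\}$. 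Combined with a law-of-large-numbers bound showing $R_n\approx R$ on this ball, this yields $D_n\gtrsim n^{-p}$ with high $P_n$-probability.

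For the numerator the key estimate is an exponential moment inequality for the centered loss difference $f_\theta=l_\theta-l_{\theta^*}$. Since $f_\theta(u)=(-\gamma)^\alpha\big[\tfrac12(\log|A|-\log|A^*|)+\tfrac12\,x^\top(A^{-1}-A^{*-1})x\big]$ is a quadratic form in a Gaussian vector plus a constant, its single-sample moment generating function is an explicit determinant, giving $\EE_P e^{-s f_\theta}\le\exp\!\big(-s\,d^2(\theta;\theta^*)+Cs^2V(\theta)\big)$ for $|s|$ below a subexponential threshold, with a variance proxy $V(\theta)\asymp d^2(\theta;\theta^*)$ near $\theta^*$ (the control on $A^{-1}$ coming from $\sigma^2\ge\sigma_0^2$). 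Multiplying over the $n+m$ independent samples, for $w$ small the net exponent is $-c_w\,n\,d^2(\theta;\theta^*)$ for a positive constant $c_w$. Integrating against the prior over dyadic shells $\{2^kM^2/n<R(\theta)-R(\theta^*)\le 2^{k+1}M^2/n\}$ and using a covering-number bound for the bulk, each shell contributes at most $e^{-c_w2^kM^2}$ times a prior-mass factor that cancels the $n^{-p}$ in $D_n$; summing over $k$ and choosing $M$ large drives $\EE_{P_n}\Pi_n(A_n)\to0$.

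The main obstacle, absent in the CPCA argument, is the non-compactness of $\Theta$, which I handle by splitting $A_n$ into a bounded bulk $\{\norm{W}\le B_n\}$ and a tail $\{\norm{W}>B_n\}$. On the bulk the covering and concentration estimates above apply with subexponential constants that are uniform, since $A^{-1}$ is uniformly bounded there. On the tail I exploit the divergence growth: because $\beta-(1-\beta)\gamma>0$, the $\log|A|$ term forces $R(\theta)-R(\theta^*)\to\infty$ as $\norm{W}\to\infty$, and with high $P_n$-probability the empirical risk inherits the same growth (the coefficient $\tfrac{n-\gamma m}{n+m}\to\beta-(1-\beta)\gamma>0$ dominates, while $\tr(A^{-1}C)$ stays bounded once the operator norm of the sample covariance is controlled). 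Hence the Gibbs weight $e^{-wnR_n(\theta)}$ makes the tail contribution negligible once $B_n\to\infty$ slowly with $n$. Making the interplay between this unbounded $\log|A|$ growth and the empirical fluctuations of the quadratic-form term quantitatively uniform—so that the tail is suppressed simultaneously with the subexponential MGF threshold on the bulk—is the delicate point, and it is precisely the constraint $\sigma^2\ge\sigma_0^2$ that keeps $A^{-1}$, and therefore the variance proxy $V(\theta)$, under control throughout this regime.
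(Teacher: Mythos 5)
Your proposal is correct in strategy and reaches the same conclusion, but it takes a more self-contained route than the paper. The paper's proof does not re-derive the contraction from scratch: it verifies the two hypotheses of the cited Gibbs-posterior contraction result (Lemma~\ref{lem:syring2020gibbs}, from Syring and Martin) --- namely that the loss is of sub-exponential type and that the prior puts mass $\gtrsim (n^{-1/2})^{dD+1}$ on $\{\theta: m(\theta,\theta^*)\vee v(\theta,\theta^*)\le n^{-1/2}\}$ --- by computing $m$ and $v$ explicitly via Gaussian fourth-moment identities, writing both as $O(\delta^2+\|\Delta\|^2)$ with $\delta=\log|A|-\log|A^*|$ and $\Delta=A^{-1}-A^{*-1}$, and using $\sigma^2\ge\sigma_0^2$ to make $\log|\cdot|$ and $\tr(\cdot^{-1}C)$ Lipschitz so that these quantities are controlled by the Euclidean parameter distance. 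You instead unpack the black box: the numerator/denominator decomposition, the single-sample MGF bound for the quadratic-form loss difference, the dyadic shelling, and the bulk/tail split are essentially the internals of the cited lemma. What your version buys is a more honest treatment of two points the paper glosses over: (i) the sub-exponential-type condition, which the paper ``derives'' from $\EE_P e^{-w(l_\theta-l_{\theta^*})}\le e^{-w(R(\theta)-R(\theta^*))}$ --- an inequality that goes the wrong way under Jensen and really requires the second-moment control you supply via the Gaussian MGF; and (ii) the non-compactness of $\RR^{D\times d}\times[\sigma_0^2,\infty)$, which the paper never addresses (the ``uniform'' prior there is improper, and the tail suppression you describe, relying on $\beta-(1-\beta)\gamma>0$ forcing $\log|A|\to\infty$, is genuinely needed). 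Your remarks on the $\mathrm{O}(d)$ orbit and the effective dimension $Dd-\binom{d}{2}+1$ versus the paper's $dD+1$ are immaterial for the rate, since any polynomial prior-mass exponent suffices. The cost of your route is length and several steps you yourself flag as delicate; if you are willing to cite the Syring--Martin theorem as the paper does, your MGF computation and Lipschitz bounds already constitute the entire verification.
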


The proof can be found in Appendix \ref{proof:PCPCArate}. As a result, the Gibbs posterior contracts to the optimal parameter as the sample size increases, which supports the generalized Bayesian PCPCA. 

\section{Experiments}\label{sec:experiments}
To demonstrate the behavior and usefulness of PCPCA, we fit the model on a series of simulated, gene and protein expression, and image datasets. Note that for most plots, we refer to the sample size-adjusted hyperparameter $\gamma^\prime=\frac{m}{n}\gamma$.

\subsection{Visualizing the role of the hyperparameter $\gamma$}
First, to demonstrate the role of the hyperparameter $\gamma$ in the PCPCA model, we fit the PCPCA model on a two-dimensional simulated dataset. In this simple dataset, the foreground data contain two subgroups, each of which shares an axis of variation with the background data. In particular, we generated the foreground and background by sampling $x_i \sim \mathcal{N}(\mu_x, \Sigma)$ and $y_j \sim \mathcal{N}(0, \Sigma)$ where $\mu_x = \left(\begin{smallmatrix} 1 \\ -1 \end{smallmatrix}\right)$ for half of the foreground samples, and $\mu_x = \left(\begin{smallmatrix} -1 \\ 1 \end{smallmatrix}\right)$ for the other half. For all samples, $\Sigma = \left(\begin{smallmatrix} 2.7 & 2.6 \\ 2.6 & 2.7 \end{smallmatrix}\right)$. We set the foreground and background sample sizes to be equal, $n=m=200$. We fit the PCPCA model for $\gamma^\prime \in \{0, 0.2, 0.6, 0.9\}$, and we visualize the 1-dimensional line defined by $\widehat{W}_{ML}$, where $d=1$ and $D=2$ (\autoref{fig:pcpca_toy}).

Recall that when $\gamma^\prime = 0$, PCPCA reduces to PPCA. In this case, $W$ captures the variation that is shared between the background and foreground data  (\autoref{fig:pcpca_toy}a). At higher values of $\gamma^\prime$, we observed that PCPCA captures the variation that is unique to the foreground dataset, which divides the two foreground subgroups (\autoref{fig:pcpca_toy}d). Note that $\widehat{W}_{ML}$ rotates nearly 90 degrees to capture the direction of maximal variation unique to the foreground data when $\gamma^\prime = 0.9$ relative to the PPCA solution when $\gamma^\prime=0$. At intermediate values of $\gamma^\prime$, $\widehat{W}_{ML}$ balances between capturing the shared and foreground-specific variation (\autoref{fig:pcpca_toy}c). 
\begin{figure}[h]
\includegraphics[width=1.0\textwidth]{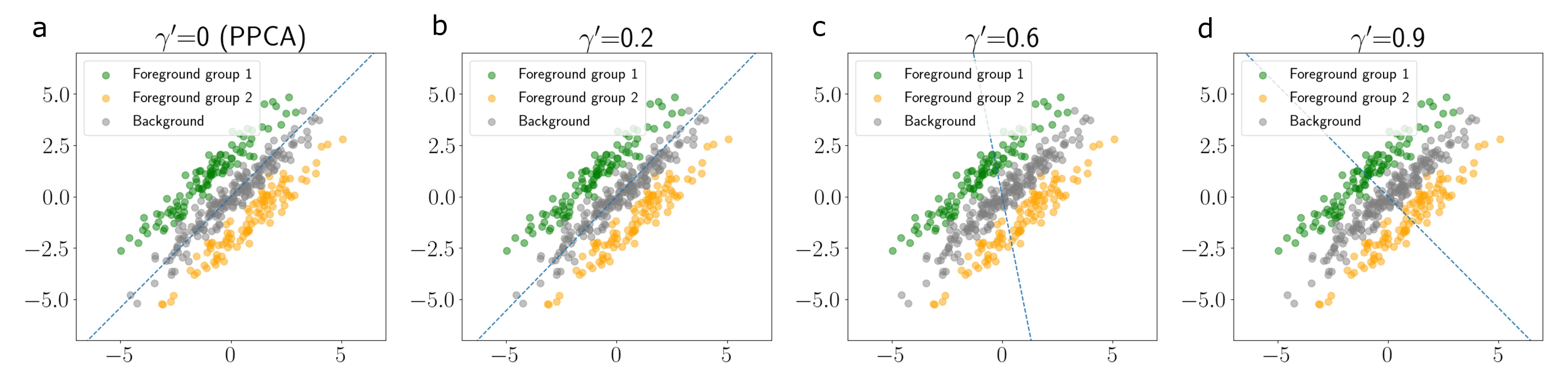}
\caption{\textbf{PCPCA MLE on simulated data.} PCPCA estimates for toy data at varying values of $\gamma^\prime$. The dotted line represents $\widehat{W}_{ML}$ ($d=1$ in this case). PCPCA recovers PPCA when $\gamma^\prime = 0$ (a) and captures the axis of variation between the two foreground subgroups when $\gamma^\prime=0.9$ (d).}
\label{fig:pcpca_toy}
\end{figure}

\subsection{Tuning $\gamma$ in experimental settings}

\subsubsection{Mouse protein expression}
We next tested PCPCA using a dataset of mouse protein expression~\citep{higuera2015self}. In this experiment, the foreground data are protein expression samples from the cortex of mice with and without Down Syndrome who were subjected to shock therapy. The background dataset consists of a set of protein expression measurements from mice without Down Syndrome who did not receive shock therapy. In total, there are $n = 270$ foreground samples and $m = 135$ background samples, each measuring the expression of $77$ proteins. The foreground samples contain $135$ mice with Down Syndrome and $135$ mice without Down Syndrome, and the intervention we model in this experiment is how shock therapy affects protein expression levels differently for mice with Down Syndrome and those without.

We fit PCPCA using a range of values for the tuning parameter $\gamma^\prime$, setting $d=2$ in each case. We found that, at higher values of $\gamma^\prime$, PCPCA was able to separate the mice with and without Down Syndrome that received shock therapy (\autoref{fig:pcpca_mice}b, c). Furthermore, PCPCA separated the background samples from the foreground samples (\autoref{fig:pcpca_mice}d). 
When $\gamma^\prime = 0$, the model is equivalent to PPCA, and visually there is minimal separation of the two groups of foreground mice (\autoref{fig:pcpca_mice}a).
\begin{figure}[h]
\centering
\includegraphics[width=1.0\textwidth]{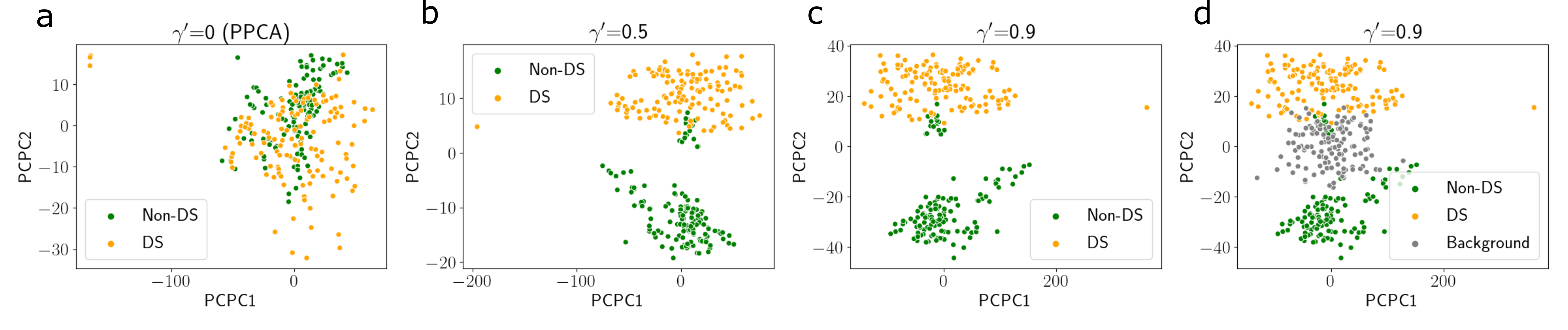}
\caption{\textbf{PCPCA on mouse protein expression data.} PCPCA applied to a dataset of mouse protein expression measurements from shock therapy-treated mice with and without Down Syndrome. Plotted in each panel are the projections of the foreground samples onto the first two components at varying values for $\gamma^\prime$. There are two subgroups in the foreground: \emph{DS} (yellow) represents mice with Down Syndrome who are exposed to shock therapy, and \emph{Non-DS} (green) represents mice without Down Syndrome who are exposed to shock therapy. (a)-(c) show the two foreground groups at each value of $\gamma^\prime$. (d) also includes the \emph{Background} dataset (gray), which is made up of mice without Down Syndrome who were not exposed to shock therapy.
}
\label{fig:pcpca_mice}
\end{figure}

We measured the degree of separation using the silhouette score (SS) of the two foreground groups of mice (Down syndrome and control) when projected into PCPCA's latent space. SS is a measure of cluster tightness~\citep{rousseeuw1987silhouettes}; higher scores represent better clustering of sample labels in the space. We found that the maximum silhouette score achieved by CPCA and PCPCA were comparable (CPCA: $0.425$, PCPCA: $0.404$). 

However, we observed different behavior between the methods in the tuning process for $\gamma^\prime$. For PCPCA, we found that SS increased monotonically with $\gamma^\prime$ (\autoref{fig:mouse_cpca_pcpca_comparison}b). In contrast, CPCA showed better clustering performance at lower values of $\gamma^\prime$, and the SS decreased with a higher $\gamma^\prime$ (\autoref{fig:mouse_cpca_pcpca_comparison}a). Additionally, the range of allowable values for $\gamma^\prime$ differed substantially between the two methods. The looser constraint on $\gamma$ in CPCA allowed for high values of $\gamma^\prime$ --- going as high as $\gamma^\prime=241$ in the mouse dataset. The reason for the large allowable values of $\gamma$ in CPCA can be understood in the context of Corollary \autoref{cly:loss}. Furthermore, at these large values of $\gamma^\prime$, the CPCA projection of the background dataset reduces to a single point (\autoref{fig:mouse_cpca_pcpca_comparison}c).  Together, these results suggest that PCPCA's parameterization allows for an easier interpretation of the tuning parameter $\gamma^\prime$, and $\gamma^\prime$ is restricted to a reasonable range in PCPCA compared with the parameter's range in CPCA.

\begin{figure}[!ht]
\centering
\includegraphics[width=1.0\textwidth]{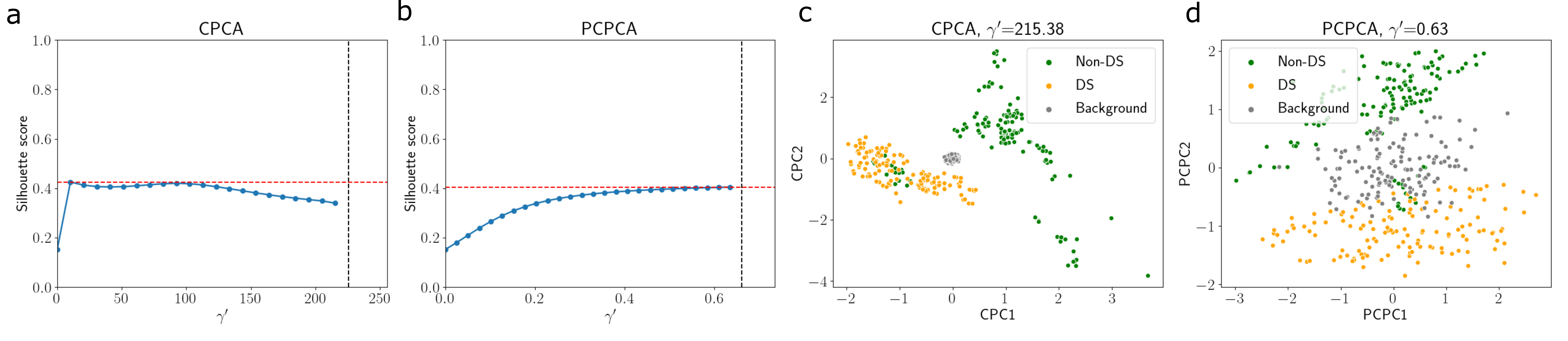}
\caption{\textbf{Comparison of CPCA and PCPCA on the mouse protein expression dataset.} (a) and (b) show the silhouette score of the foreground latent variables at a range of values for $\gamma^\prime$ for each method. The dotted vertical lines shows the first value of $\gamma^\prime$ at which each method ``failed,'' and the red horizontal lines indicate the maximium silhouette score achieved by each method. (c) and (d) show the latent variables for each method at the largest value of $\gamma^\prime$ at which each method ``succeeded.'' Color labels are the same as \autoref{fig:pcpca_mice}. }
\label{fig:mouse_cpca_pcpca_comparison}
\end{figure}

\subsubsection{Single-cell RNA sequencing data}
To test our model in a high-dimensional setting, we fit PCPCA with $d=2$ to a single-cell RNA sequencing (scRNAseq) dataset~\citep{zheng2017massively}. Here, the foreground dataset contains gene expression measurements from bone marrow mononuclear cells (BMMCs) derived from a patient with acute myeloid leukemia (AML) before and after they received a stem-cell transplant ($n=4501$). 
The background dataset contains gene expression measurements of BMMCs from a healthy patient ($m=1985$). We preprocessed the data by log-transforming and subsetting to the 500 most variable genes, in accordance with previous analyses on these data~\citep{zheng2017massively,abid2018exploring}.

Visualizing the two-dimensional latent variables from PCPCA, we found that the model separates the pre- and post-transplant cells effectively at higher values of $\gamma^\prime$, while PPCA ($\gamma^\prime=0$) fails to do so (\autoref{fig:pcpca_scrnaseq}). Furthermore, we measured the silhouette score for these two foreground subgroups in the CPCA and PCPCA reduced-dimension spaces. Similar to our observation with the mouse protein expression dataset, we found that, for PCPCA, the silhouette score monotonically increased with $\gamma^\prime$, while CPCA's performance peaked at lower allowable values of $\gamma^\prime$ (\autoref{fig:scrnaseq_silhouette}). Additionally, the allowable range for $\gamma^\prime$ in CPCA was again much larger than that for PCPCA. The maximum silhouette scores achieved by each method were roughly equivalent (CPCA: $0.20$, PCPCA: $0.23$).
These results imply that PCPCA is effective with high-dimensional data and further demonstrate the advantage of PCPCA's parameterization over CPCA.

\begin{figure}[!ht]
\centering
\includegraphics[width=1.0\textwidth]{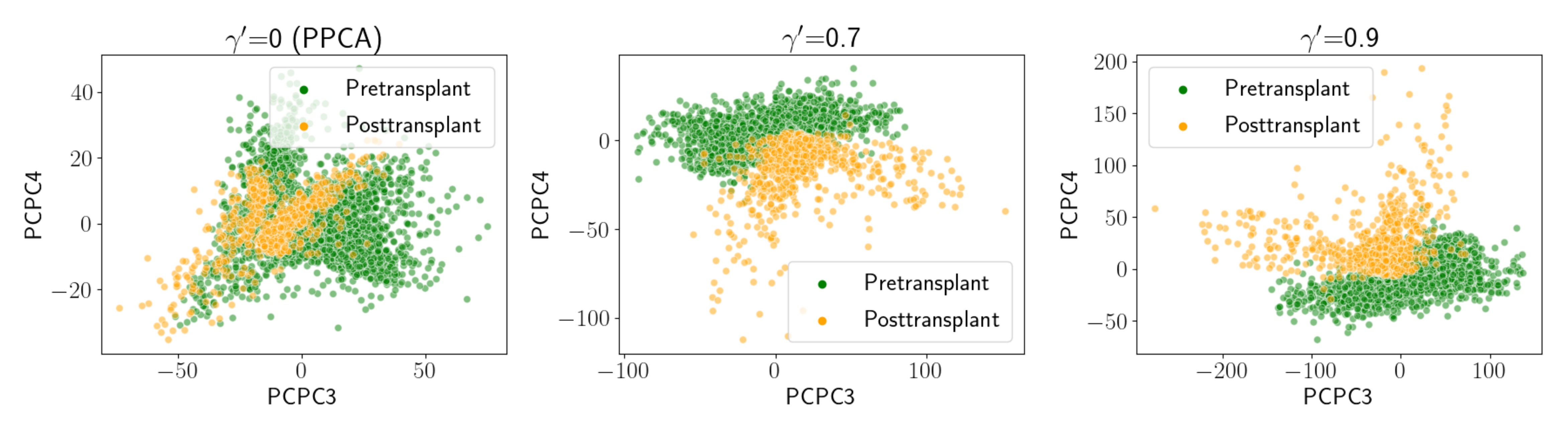}
\caption{\textbf{PCPCA applied to single-cell RNA-seq data.} This dataset contains $n=4501$ foreground samples (cells from AML patient, plotted here) and $m=1985$ background samples (cells from a healthy patient, not plotted here). The foreground cells contain two subgroups: \emph{Pre-transplant} (green) and \emph{Post-transplant} (orange). Plotted here are the two foreground groups projected onto PCPCA's third and fourth components for varying values of $\gamma^\prime$.
}
\label{fig:pcpca_scrnaseq}
\end{figure}

\begin{figure}[h!]
\centering
\includegraphics[width=0.7\textwidth]{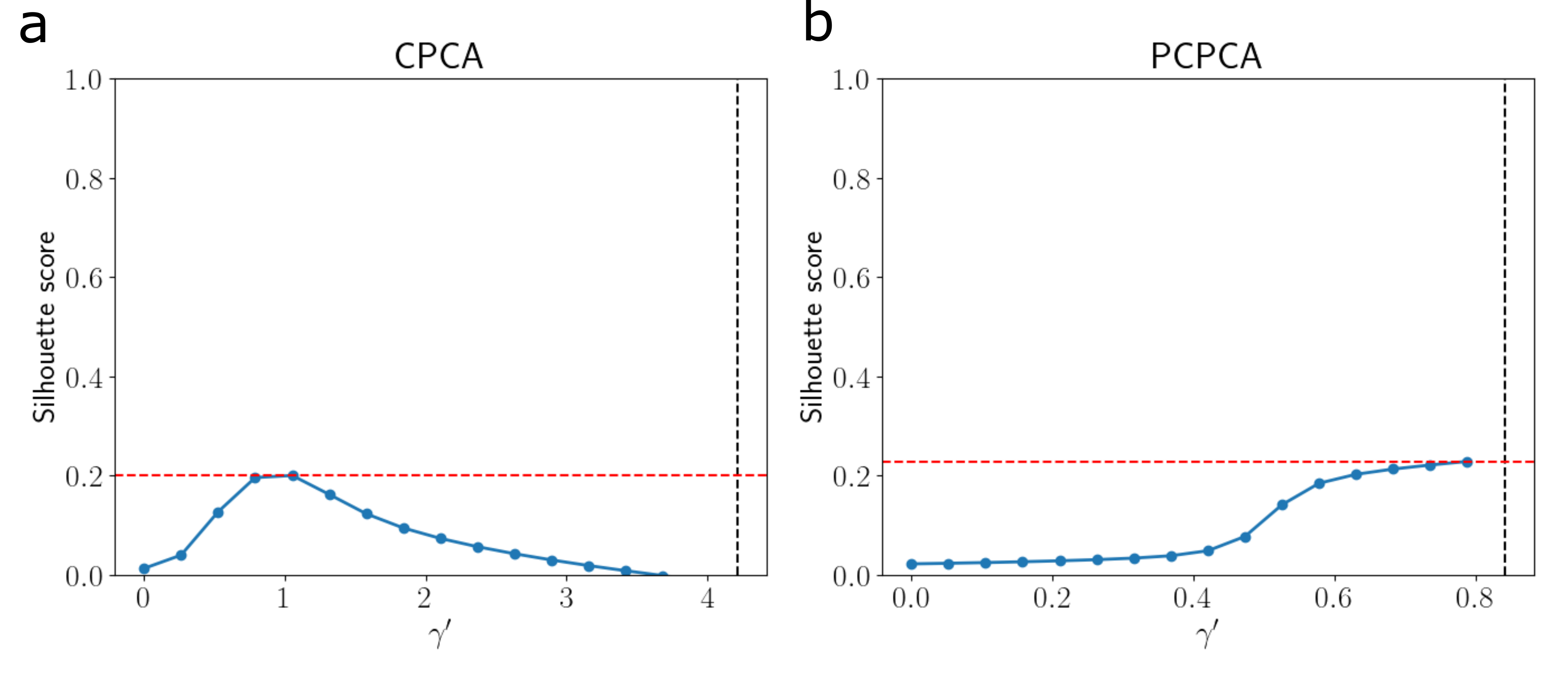}
\caption{\textbf{Silhouette scores for PCPCA and CPCA clusters in scRNA-seq data.} Shown here are the silhouette score across a range of $\gamma^\prime$ for (a) CPCA and (b) PCPCA. The dotted vertical lines shows the first value of $\gamma^\prime$ at which each method ``failed,'' and the red horizontal lines indicate the maximum silhouette score achieved by each method.  }
\label{fig:scrnaseq_silhouette}
\end{figure}

\subsection{Robustness to noise}
An advantage of PCPCA's model-based approach to contrastive learning is its ability to explicitly account for noise in the data. To test this directly, we again fit PCPCA and CPCA to the mouse protein expression dataset, but this time we injected additive, independent Gaussian noise across the features. In particular, we transformed every foreground and background sample $x_i$ and $y_j$ as
\begin{align*}
    \widetilde{x}_i &= x_i + \epsilon_i \\
    \widetilde{y}_j &= y_j + \epsilon_j,
\end{align*}
where $\epsilon_i, \epsilon_j \sim \mathcal{N}(0, \sigma^2 I_D)$. We generated ten datasets for $\sigma^2 \in \{0.5, 1, \dots, 5\}$. We also included the case when $\sigma^2=0$, which is the original dataset with no additional noise.

We fit PCPCA and CPCA on each of these noisy datasets and measured the silhouette score of PCPCA and CPCA with $d=2$. We repeated this experiment $100$ times for each value of $\sigma^2$. We tuned $\gamma$ independently for PCPCA and CPCA for each value of $\sigma^2$ and took the $\gamma$ with the highest silhouette score. We found that, while the performance of both methods declined with more noise, PCPCA showed better performance than CPCA at higher noise levels (\autoref{fig:vary_sigma2}).
This suggests that PCPCA is more robust to noise than CPCA, demonstrating another advantage of our model-based approach.

\begin{figure}[h!]
\centering
\includegraphics[width=0.5\textwidth]{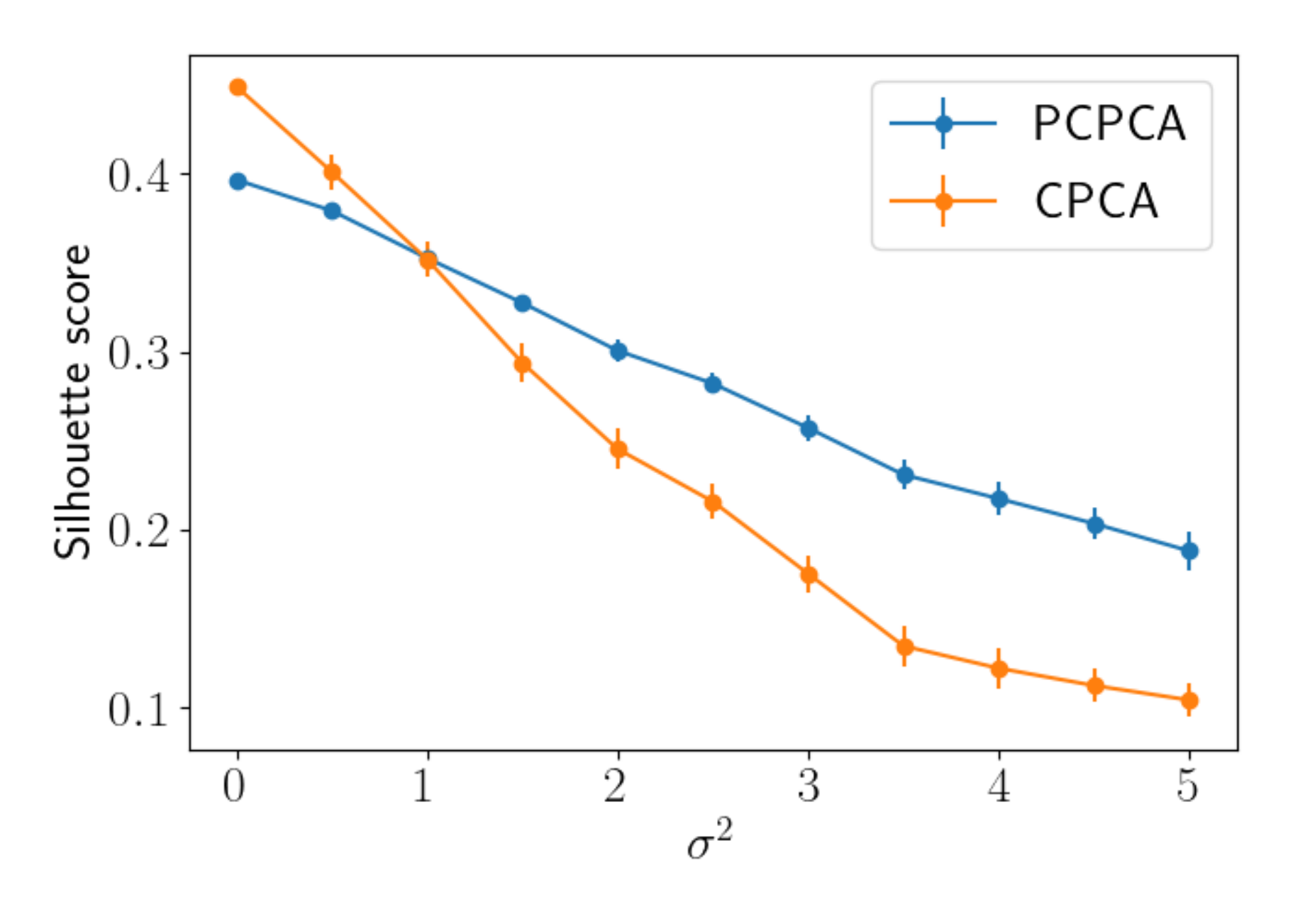}
\caption{\textbf{Model performance with increasing noise.} We injected independent additive Gaussian noise with variance $\sigma^2$ to the mouse protein expression dataset. We then measured the silhouette score of the foreground latent variables. CPCA is shown in orange, and PCPCA is show in blue, both with 95\% confidence interval whiskers.
}
\label{fig:vary_sigma2}
\end{figure}

\subsection{Generating data from the foreground distribution}
Another advantage of PCPCA's model-based approach is the ability to generate data from the foreground data distribution. In CPCA, this is not possible because there is no associated generative model. Note that, in PCPCA, we cannot reasonably generate data from the background distribution because the objective function is a relative likelihood with the goal of minimizing the relative likelihood of the background model. This is not a problem in most settings, as we are typically interested in exploring the variance unique to the foreground data.

To demonstrate PCPCA's ability to generate realistic foreground data, we used the corrupted MNIST dataset~\citep{abid2018exploring}. In this dataset, the foreground samples are MNIST digits (0s and 1s) superimposed onto natural images of grass from ImageNet~\citep{russakovsky2015imagenet}. The background samples are unaltered natural images of grass (\autoref{fig:corrupted_mnist}).
\begin{figure}[h!]
\centering
\includegraphics[width=0.8\textwidth]{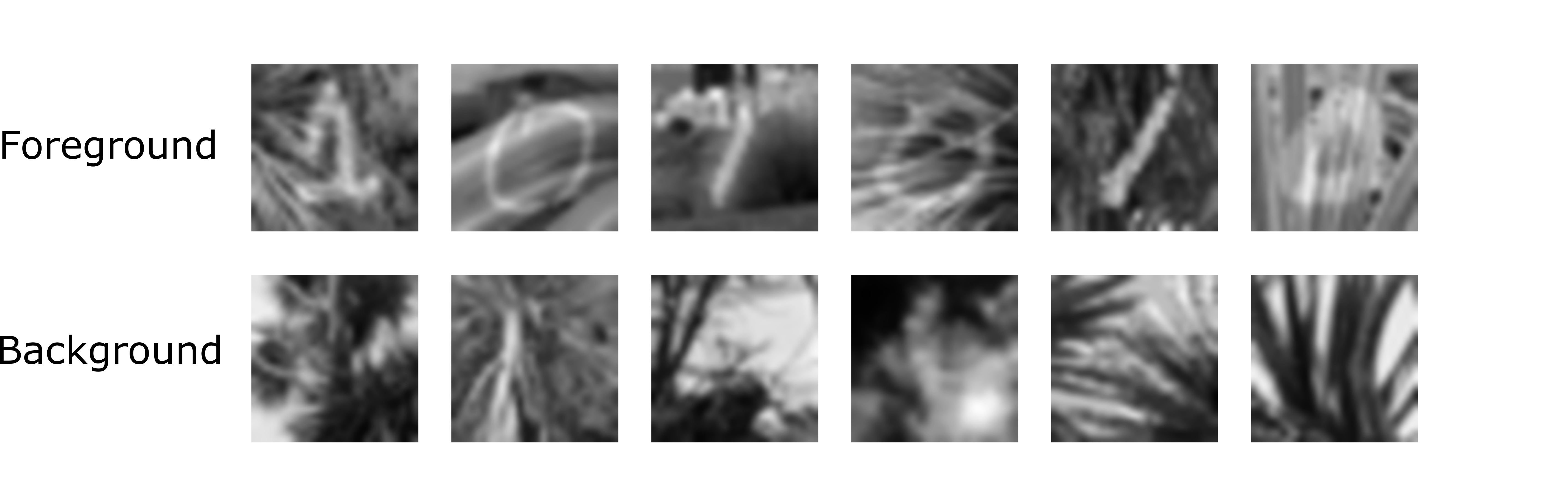}
\caption{\textbf{Examples of the corrupted MNIST dataset.} The top row contains examples from the foreground data, and the bottom row contains examples from the background data. }
\label{fig:corrupted_mnist}
\end{figure}

We fit PCPCA with $d=2$ and $\gamma^\prime=0.8$ to obtain $\widehat{W}_{ML}$. For comparison, we also fit PPCA ($\gamma^\prime=0$). Examining the latent variables, we found that PCPCA showed substantially better clustering of the two MNIST digits than PPCA --- the silhouette score for PCPCA was 0.33, while the score for PPCA was just 0.007 (\autoref{fig:generated_corrupted_mnist}a, c).

To generate new data, we sampled $S=300$ i.i.d. latent variables $z_s \sim \mathcal{N}(0, I_d)$ for $s = 1, \dots, S$, and projected these to the data space to obtain synthetic images. Specifically, each generated image is computed as $\widehat{x}_s = \widehat{W}_{ML} z_s + \mu_x$ where $\mu_x$ is the mean of the foreground data. We found that these samples recovered the variation in the MNIST digits in the foreground data (\autoref{fig:generated_corrupted_mnist}d). In contrast, samples generated from PPCA did not show as much of the digit structure (\autoref{fig:generated_corrupted_mnist}b).
These results suggest that PCPCA can generate realistic data from the foreground distribution, which is useful for exploratory data analysis.
\begin{figure}[h!]
\centering
\includegraphics[width=1.0\textwidth]{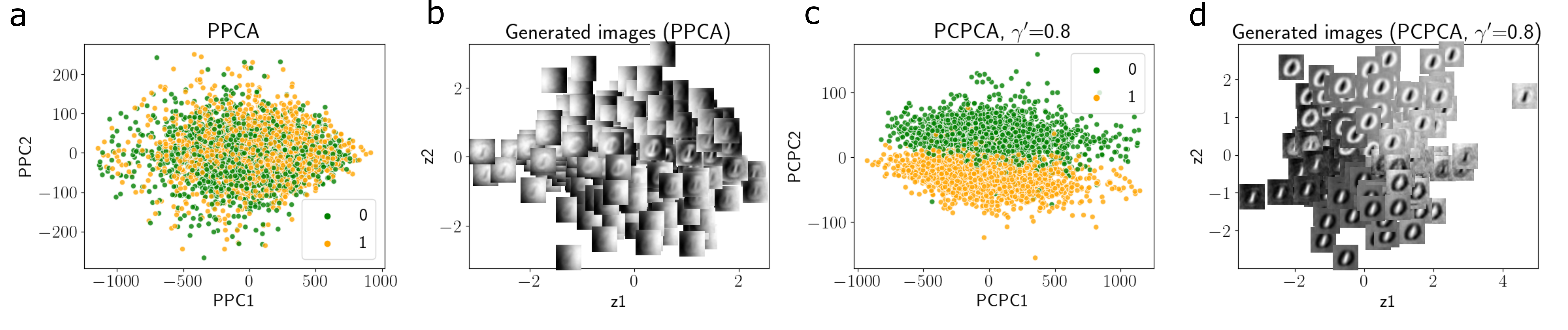}
\caption{\textbf{MNIST digits generated from the PCPCA model.} (a) and (c) show the projected foreground samples for PPCA ($\gamma^\prime=0$) and PCPCA ($\gamma^\prime=0.8)$, respectively. (b) and (d) show new foreground samples generated from the foreground distribution of PPCA and PCPCA, respectively. }
\label{fig:generated_corrupted_mnist}
\end{figure}

Furthermore, using $\widehat{W}_{ML}$ estimated for PPCA and PCPCA fit to the corrupted MNIST data, we computed the log likelihood of a set of held-out samples of MNIST digits without any corruption. 
We found that PCPCA has a higher test likelihood than PPCA on these uncorrupted digits (\autoref{fig:mnist_test_ll}). This suggests that the foreground model for PCPCA more accurately captures the uncorrupted MNIST digits relative to PPCA.

\begin{figure}[h!]
\centering
\includegraphics[width=0.5\textwidth]{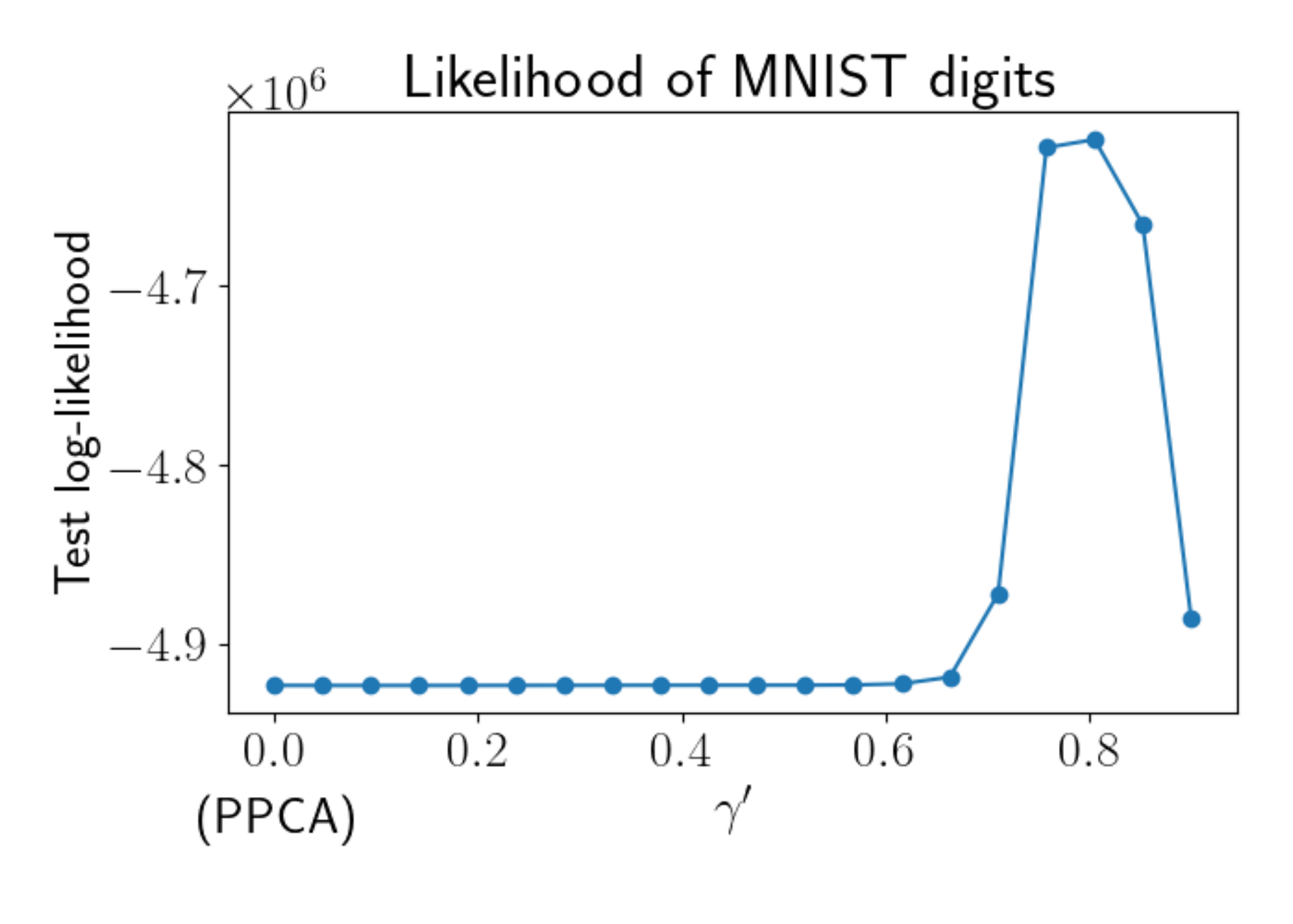}
\caption{\textbf{Log-likelihood of held-out MNIST digits.} Using a PCPCA model fit on the corrupted MNIST dataset, plotted here is the log-likelihood of a held-out set of uncorrupted MNIST digit images. Note that $\gamma^\prime=0$ corresponds to PPCA. }
\label{fig:mnist_test_ll}
\end{figure}

\subsection{Gibbs posterior sampling}
We next sought to numerically evaluate the Gibbs posterior for PCPCA. To estimate the posterior, any sampling-based inference methods can be applied. We use the No U-Turn Sampler~\citep{hoffman2014no} --- which is an extension of Hamiltonian Monte Carlo --- as implemented in the Stan programming language~\citep{carpenter2017stan}. We place uniform priors on $W$ and $\sigma^2$, as required by our theoretical results. 

\subsubsection{Visualizing Gibbs posterior samples}
First, we sought to visualize the posterior for $W$. To do so, we used the same toy dataset as our initial experiments in \autoref{fig:pcpca_toy}. Recall that these samples are generated from a mixture model in which the background distribution is a two-dimensional Gaussian, and the foreground distribution is a mixture of two Gaussians. Specifically,
$$p(x) = \beta \left\{ \pi \mathcal{N}(x | \mu_{f1}, \Sigma_{f1}) + (1 - \pi) \mathcal{N}(x | \mu_{f2}, \Sigma_{f2}) \right\} + (1 - \beta) \mathcal{N}(x | \mu_b, \Sigma_b)$$
where $\beta$ controls the mixture proportion between the background and foreground, and $\pi$ controls the mixture proportion between the foreground subgroups. In this case, we set $\beta=\pi=0.5$, $\Sigma_b = \Sigma_f = \left(\begin{smallmatrix} 4.0 & 2.6 \\ 2.6 & 4.0 \end{smallmatrix}\right)$, $\mu_b = \left(\begin{smallmatrix} 0 \\ 0 \end{smallmatrix}\right)$, $\mu_{f1} = \left(\begin{smallmatrix} -1.5 \\ 1.5 \end{smallmatrix}\right)$, $\mu_{f1} = \left(\begin{smallmatrix} 1.5 \\ -1.5 \end{smallmatrix}\right)$, and $\gamma=0.85$. Using this model, we generated three datasets with increasing sample sizes, respectively containing $30$, $60$, and $300$ samples in each condition. After estimating the Gibbs posterior using each dataset, we drew $100$ samples from the posterior for $W$ (\autoref{fig:figure13}). 

As expected, we found that the posterior increasingly concentrated around the axis separating the foreground subgroups as the sample size increased. With $n=30$, the posterior draws for $W$ were close to uniformly distributed, but with $n=300$, the posterior became tightly concentrated around the desired axis. This suggests that the PCPCA Gibbs posterior is a viable tool for accounting for uncertainty in the context of our loss-based modeling framework. Furthermore, it confirms that the posterior can be estimated using well-known MCMC methods, not requiring any model-specific algorithms.

\begin{figure}[h]
\centering
\includegraphics[width=1.0\textwidth]{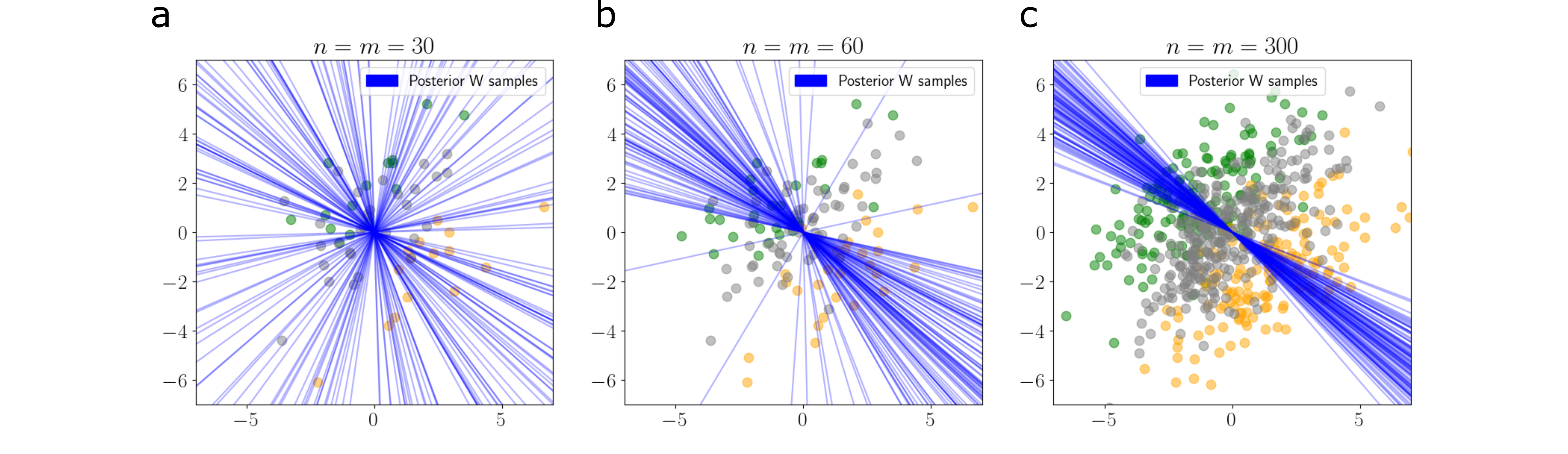}
\caption{\textbf{Sampling from the Gibbs posterior.} Shown here are samples from the Gibbs posterior for $W$ using the toy data. We drew $100$ samples from the posterior for varying sample sizes: (a) 30, (b) 60, (c) 300.}
\label{fig:figure13}
\end{figure}

\subsubsection{Posterior convergence rate}
Next, we sought to validate the posterior convergence rate of $n^{-1/2}$ for the PCPCA Gibbs posterior. To do so, we again simulated data from a mixture of two-dimensional Gaussians. However, in this experiment, we set the foreground to be a single multivariate Gaussian, rather than a mixture of two Gaussians. 

To estimate the convergence rate, we first fit the Gibbs posterior $\Pi_n$ to the simulated data, setting $d=2$ in this case. We then sampled $T=1,000$ parameter values from the posterior, $(W_1, \sigma^2_1), \dots, (W_T, \sigma^2_T) \sim \Pi_n$. We estimated the divergence using each of these samples and the true risk-minimizing parameter values $(W^\star, \sigma^{2\star})$ in \autoref{eq:risk_minizer}. Recall that, in this case, the divergence is $d(\theta, \theta^\star) = (R(\theta) - R(\theta^\star))^{1/2}$, where $\theta = (W, \sigma^2)$ and $R(\cdot)$ is the risk (\autoref{eq:pcpca_risk}). Finally, we computed the fraction of these divergences that exceeded $n^{-1/2}$. Specifically, we computed
\begin{equation*}
\widehat{D} = \frac1T \sum\limits_{t=1}^T I\left(d(\theta_t, \theta^\star) > Cn^{-1/2}\right),
\end{equation*}
where $I$ is the indicator function. We estimated this quantity for $n \in \{50, 200, 300, 400, 500\}$, where $n$ is the total number of samples across conditions. We repeated this five times for each value of $n$.

We found that $\widehat{D}$ fell to zero as $n$ increased, which matches our theoretical result (\autoref{fig:figure14}). This result numerically validates PCPCA's posterior convergence rate of $n^{-1/2}$, which is optimal. It also provides further evidence that the PCPCA Gibbs posterior is a principled tool for performing inference in our framework.

\begin{figure}[h]
\centering
\includegraphics[width=0.5\textwidth]{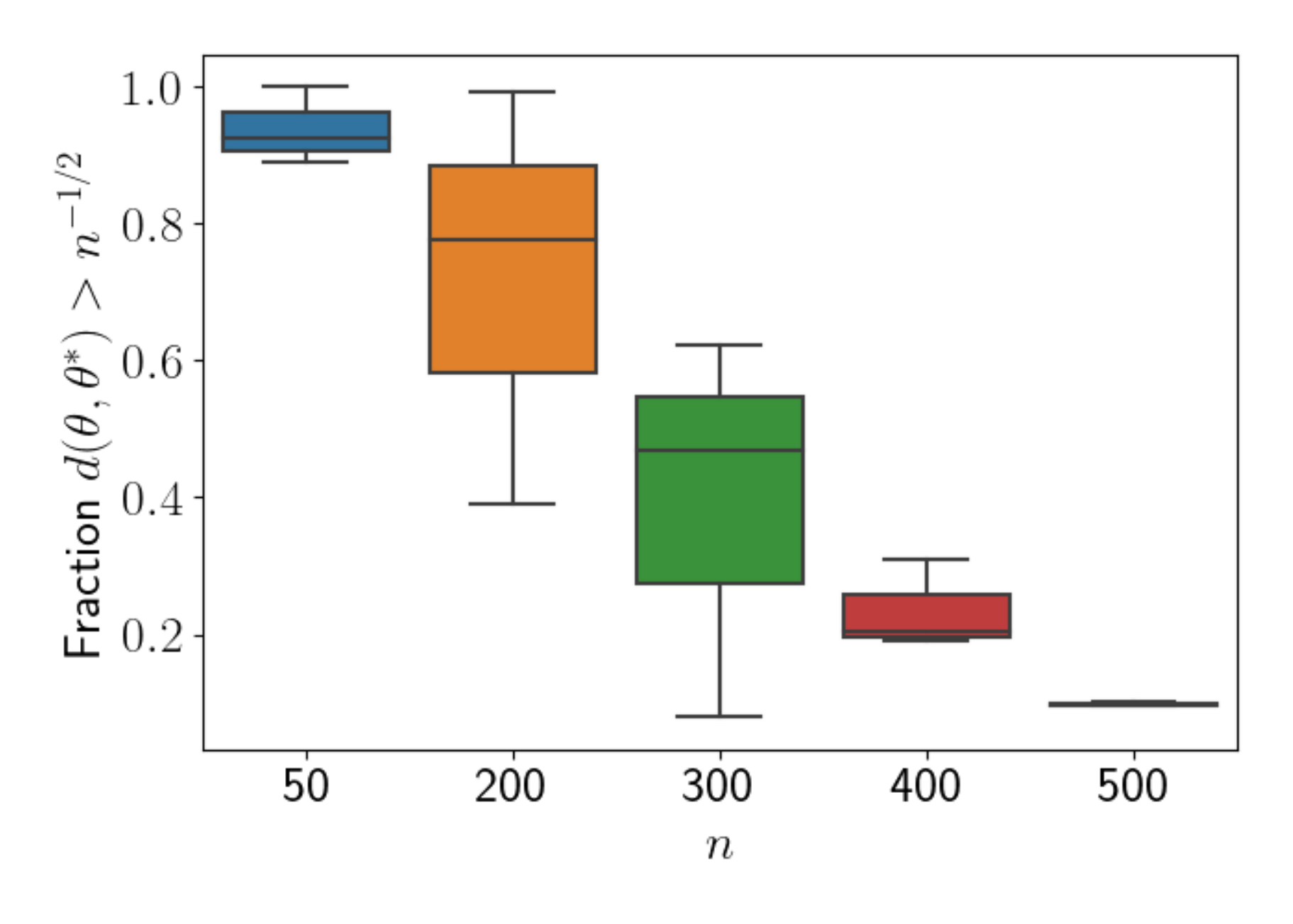}
\caption{\textbf{Empirical validation of the posterior contraction rate for PCPCA Gibbs posterior.} For increasing values of $n$, we randomly drew $1000$ samples for $W$; computed the divergence for each; and computed the fraction of these divergences that exceeded $n^{-1/2}$. The boxes show the results for ten repetitions for each value of $n$.}
\label{fig:figure14}
\end{figure}

\section{Contrastive PCA for missing data}
Another advantage of our probabilistic modeling approach is the ability to handle missing data in a principled way. Missing or incomplete data is extremely common in real-world datasets. Due to its lack of a probabilistic model, CPCA is unable to deal with missing data. In this section, we show how to find maximum likelihood estimates for PCPCA in settings with missing data, and we demonstrate this method through simple experiments.

PPCA can handle missing data, where the MLE relies on an EM algorithm that iteratively reconstructs the missing matrix elements from the PCs, and re-estimates the PCs from the expected complete matrix~\citep{tipping1999probabilistic, roweis1998algorithms}. However, in PCPCA, the target function \eqref{eqn:PCPCA} to be maximized is not a likelihood, so we cannot apply the EM algorithm. Instead, we propose a data augmentation method by introducing a indicator matrix representing the location of missing elements to obtain closed-form gradients of the objective so that we can make use of existing gradient-based optimization algorithms, such as gradient descent. We first present the details of our approach and then demonstrate its performance through experiments.

\subsection{Gradient descent with missing data}
Assume some elements of both the background and foreground matrices are missing. Let $x=[x^o,x^u]^\top$, where $x^o$ is the sub-vector of observed features and $x^u$ unobserved, and $y=[y^o,y^u]$ with the same partition. Consider the missing-at-random setting, where the MLE of the complete data is the same as the MLE of the non-missing data only. Let $x_i^o$ be the observed subvector of $x_i$ with length $D_i$, where the locations observed are $i_1,\cdots,i_{D_i}$. Then, we introduce a indicator matrix $L_i$ with dimension $D_i\times D$ such that $x_i^o=L_ix_i$:
$$(L_i)_{kl} = \begin{cases} 1 & l=i_k\\
0 & \text{otherwise}.
\end{cases}$$ 
Similarly, let $y_j^o$ be the observed subvector of $y_j$ with length $E_j$, and define $M_j\in\RR^{E_j\times D}$ as before such that $y_j^o=M_j y_j$.
Observe that $$x_i^o\sim N(0,A_i),~~y_j^o\sim N(0,B_j),~~A_i\coloneqq L_i(WW^\top+\sigma^2\Id_D)L_i^\top,~~B_j\coloneqq M_j(WW^\top+\sigma^2\Id_D)M_j^\top.$$
As a result, the objective function of the observed data is
\begin{align*}
&l(W,\sigma^2)=l(X^o|W,\sigma^2)-\gamma l(Y^o|W,\sigma^2)\\
&= -\frac{1}{2}\sum_{i=1}^n\left(D_i\log (2\pi)+\log \det(A_i)+\tr( A_i^{-1}x_i^o{x_i^o}^\top)\right)+\frac{\gamma}{2}\sum_{j=1}^m\left(E_i\log (2\pi)+\log \det(B_j)+\tr( B_j^{-1}y_j^o{y_j^o}^\top)\right).
\end{align*}
Then we take the derivative w.r.t. to $W$:
\begin{align*}
    \frac{\partial l}{\partial W}& =-\left\{\sum_{i=1}^n L_i^\top A_i^{-1}\left(\Id_{D_i}-x_i^o{x_i^o}^\top A_i^{-1}\right)L_i-\gamma \sum_{j=1}^m M_j^\top B_j^{-1}\left(\Id_{E_j}-y_j^o{y_j^o}^\top B_j^{-1}\right)M_j\right\}W.
\end{align*}
Similarly, the derivative w.r.t. $\sigma^2$ is
\begin{align*}
    \frac{\partial l}{\partial \sigma^2}& =-\frac{1}{2} \sum_{i=1}^n\left(\tr(A_i^{-1}L_iL_i^\top)-\tr(A_i^{-1}x_i^o{x_i^o}^\top A_i^{-1}L_iL_i^\top)\right)\\
    &~~~+\frac{\gamma}{2} \sum_{j=1}^m\left(\tr(B_j^{-1}M_jM_j^\top)-\tr(B_j^{-1}y_j^o{y_j^o}^\top B_j^{-1}M_jM_j^\top)\right).
\end{align*}
We can then use iterative optimization algorithms, such as gradient descent, to find $\widehat{W}_{ML}$ and $\widehat{\sigma}^2_{ML}$.

\subsection{Imputing missing data}
After finding $\widehat{W}_{ML}$ and $\widehat{\sigma}^2_{ML}$ as above, the unobserved foreground values can be imputed. Let $P_i$ be an indicator matrix with dimension $U_i\times D$, where $U_i=D-D_i$, such that $x_i^u=P_i x_i$. Further, define
\begin{equation*}
    C_i\coloneqq P_i(\widehat{W}_{ML}\widehat{W}_{ML}^\top + \widehat{\sigma}_{ML}^2 \Id_D)P_i^\top,~~F_i\coloneqq P_i(\widehat{W}_{ML}\widehat{W}_{ML}^\top + \widehat{\sigma}_{ML}^2 \Id_D)L_i^\top.
\end{equation*}
Continuing to assume mean-centered data, observe that
\begin{equation*}
    x_i^u | x_i^o \sim N(F_i A_i^{-1} x_i^o, C_i - F_i A_i^{-1} F_i^\top).
\end{equation*}
The unobserved values can then be imputed using the conditional mean $\widehat{x}_i^u = F_i A_i^{-1} x_i^o$.

\subsection{Experiments with missing data}
\subsubsection{Simulated data}
To test PCPCA in the presence of missing data, we first fit the model to a synthetic dataset. 

To construct the dataset, we generated foreground and background data from separate PPCA models in order to give them separate covariance structures. In particular, for $i \in [n]$ and $j \in [m]$, we sampled $x_i \sim \mathcal{N}(W^\text{f} z_i, \sigma^2 \Id_D)$ and $y_j \sim \mathcal{N}(W^\text{b} z_j, \sigma^2 \Id_D)$ where $z_i, z_j \sim \mathcal{N}(0, \Id_d)$. In our experiments, we set $n=m=100$, $D=10$, $d=2$, and $\sigma^2=1$. We sampled the elements of $W^\text{f}$ and $W^\text{b}$ independently at random from a standard Gaussian.

To test the performance of PCPCA, we randomly removed elements of these two matrices with probability $p$, simulating a missing-at-random scenario. In our experiments, we used $p \in \{0, 0.1, 0.2, \dots, 0.7\}$. After removing the randomly chosen values, we fit PCPCA on the partially observed dataset using the gradients derived in the previous section, along with the Adam optimizer for additional stability~\citep{kingma2014adam}. Using the fitted model, we computed the log likelihood of a held-out dataset of foreground data. For comparison, we also fit PPCA on the pooled data $X \cup Y$ and reported the log likelihood of the held-out foreground dataset.

We found that PCPCA showed relatively steady test log likelihood for $p\leq0.3$ (\autoref{fig:pcpca_missingdata_sim}a). For higher levels of missing data, PCPCA showed a steady decline in test log likelihood. In contrast, PPCA showed a substantially lower test log-likelihood than PCPCA at all values of $p$. Meanwhile, CPCA and PCA do not allow for settings where $p>0$.

For each of these partially-observed datasets, we also imputed the missing values in the foreground data and computed the reconstruction error. As before, let $x_i^u \in \mathbb{R}^{U_i}$ be the true values for the unobserved portion of $x_i$, and let $\widehat{x}_i^u$ be the PCPCA reconstruction of these values. We computed the mean-squared error of these reconstructions:
\begin{equation*}
    \text{MSE} = \frac1n \sum\limits_{i=1}^n \frac{1}{U_i} \|x_i^u - \widehat{x}_i^u\|_2^2.
\end{equation*}
We found that PCPCA achieved low reconstruction error when few values were missing, and the error increased steadily for higher fractions of unobserved values (\autoref{fig:pcpca_missingdata_sim}b). In all cases, PCPCA performed better than PPCA.

These results suggest that PCPCA is robust in the presence of missing data, even when a relatively large fraction of the data is missing at random.

\begin{figure}[h!]
\centering
\includegraphics[width=1.0\textwidth]{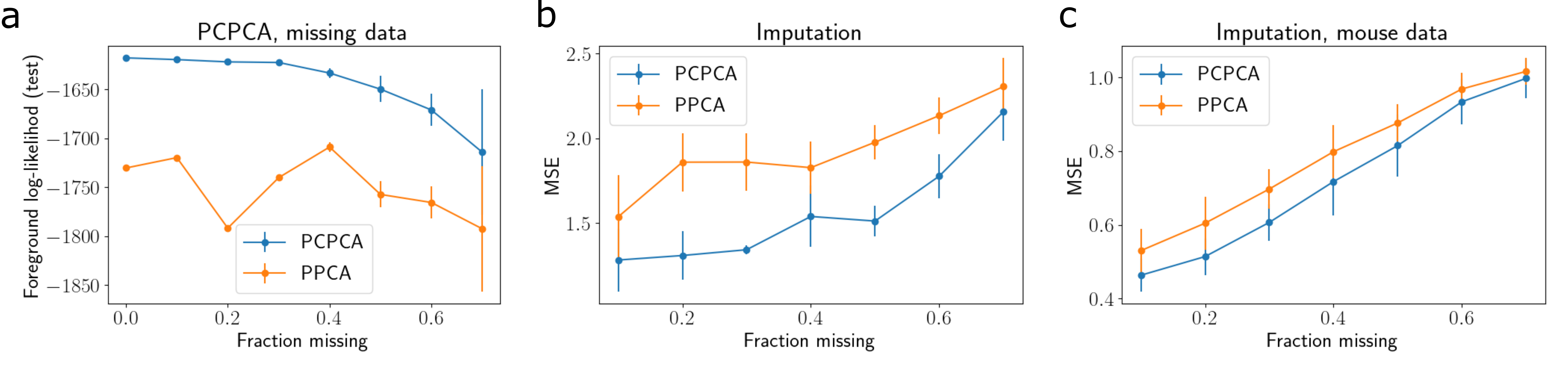}
\caption{\textbf{PCPCA and PPCA with missing data.} In (a) and (b), we used a simulated dataset in which $n=m=100$, $D=10$, and $X$ and $Y$ were generated with separate PPCA models. We fit PCPCA on $X$ and $Y$ and PPCA on the concatenation of $X$ and $Y$. Values were dropped with increasing probability. (a) shows the test log-likelihood was computed for a held-out foreground dataset. (b) shows the mean-squared error of the imputed values for the training data from the PCPCA model. (c) shows the imputation error for the mouse protein expression data. }
\label{fig:pcpca_missingdata_sim}
\end{figure}

\subsubsection{Mouse protein expression data}
To further validate PCPCA's ability to handle missing data, we applied the model to the mouse protein expression dataset. Here, we randomly removed elements from the foreground and background data with probability $p$, where $p \in \{0, 0.2, 0.6, 0.9\}$. We fit the model to each partially-observed dataset using gradient descent and the Adam optimizer to obtain $\widehat{W}_{ML}$. We set $d=2$ and $\gamma=0.4$ for all runs based on previous experiments. Finally, we projected the fully-observed dataset to the latent space and computed the silhouette score.

We observed that the two subgroups of mice were preserved even with a large fraction of the data masked (\autoref{fig:pcpca_missingdata_mouse}). The silhouette scores confirmed this, staying steady for $p<0.6$. Furthermore, we imputed the missing foreground values using the PCPCA model, and we found that the model's reconstructions consistently showed lower error than PPCA (\autoref{fig:pcpca_missingdata_sim}c).

These results imply that PCPCA could be used in many real-world settings in which datasets are only partially observed.

\begin{figure}[h]
\centering
\includegraphics[width=1.0\textwidth]{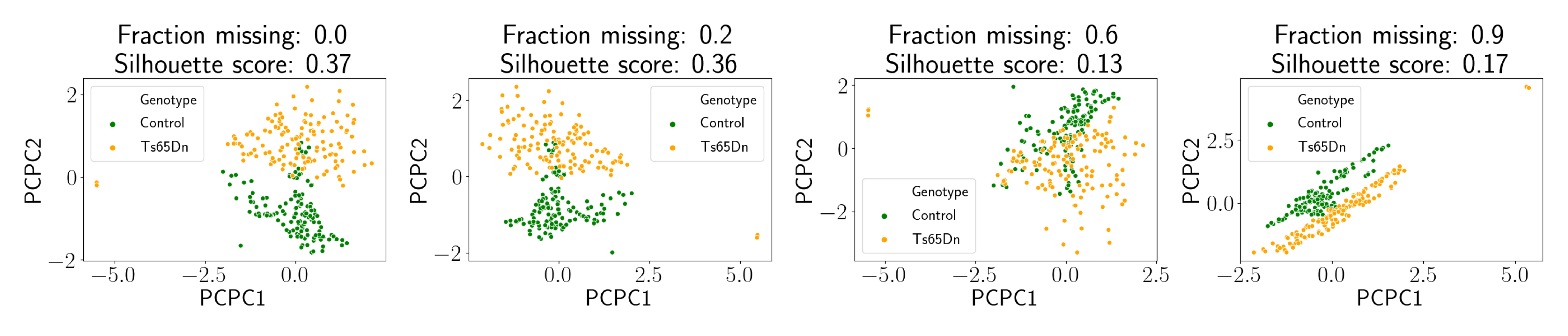}
\caption{\textbf{PCPCA on mouse protein expression data with missing data.} We randomly removed values from the data and fit PCPCA. Shown here are the latent projections of the fully-observed foreground dataset for increasing missing probabilities. }
\label{fig:pcpca_missingdata_mouse}
\end{figure}

\section{Discussion}
In this paper, we presented a probabilistic model, PCPCA, for learning the contrastive dimensions between a foreground dataset and a background dataset. We derived conditions for the tuning parameter $\gamma$ under which our model and a previous method, CPCA, are well-defined. In contrast to CPCA, our model-based approach allows for uncertainty quantification, is robust to noise, includes the ability to sample from the fitted model, and is able to accept and impute missing data. We developed a generalized Bayesian framework that allows for principled inference in our model, despite it not having a well-defined likelihood. To find the loss-minimizer in the presence of missing data, we derived a gradient descent algorithm that only relies on the observed data. We demonstrated the utility of PCPCA in several applications using protein expression, gene expression, and image data. We found that PCPCA outperformed PPCA and CPCA in capturing subgroup structure and in its robustness to noise and missing data.

Several future directions remain to be explored. First, more general inference procedures could be developed for the PCPCA relative likelihood objective function. Likelihood ratios have been well-studied for ratios comparing two sets of parameters under a single shared dataset~\citep{anderson1962introduction}. However, there has been little work studying relative likelihoods that compare one set of shared parameters under two datasets. Typical inference procedures --- such as expectation maximization (EM) -- cannot be used in this contrastive setting because the objective function is not a well-defined likelihood. Future work will benefit from adapting well-known methods, such as EM, for estimation, inference, and optimization to the novel relative likelihood objective presented in this work.

Second, more sophisticated optimization procedures could be used for performing gradient descent in the presence of missing data. In both PPCA and PCPCA, the gradient of the likelihood w.r.t. the noise variance $\sigma^2$ shows a sharp increase as $\sigma^2 \to 0$. Modern optimization techniques could be used to further stabilize the gradient in this regime.

Finally, the PCPCA model itself could be extended in several ways. Future versions could incorporate various data likelihoods in order to capture non-Gaussian data. Other extensions might allow multiple foreground datasets, possibly allowing structured relationships between those foreground matrices. Furthermore, a non-linear version of PCPCA could be considered.

\newpage
\bibliographystyle{apalike}
\bibliography{ref.bib}

\begin{thebibliography}{}

\bibitem[Abid et~al., 2017]{abid2017contrastive}
Abid, A., Zhang, M.~J., Bagaria, V.~K., and Zou, J. (2017).
\newblock Contrastive principal component analysis.
\newblock {\em arXiv preprint arXiv:1709.06716}.

\bibitem[Abid et~al., 2018]{abid2018exploring}
Abid, A., Zhang, M.~J., Bagaria, V.~K., and Zou, J. (2018).
\newblock Exploring patterns enriched in a dataset with contrastive principal
  component analysis.
\newblock {\em Nature Communications}, 9(1):1--7.

\bibitem[Anderson, 1962]{anderson1962introduction}
Anderson, T.~W. (1962).
\newblock An introduction to multivariate statistical analysis.
\newblock Technical report, Wiley New York.

\bibitem[Boileau et~al., 2020]{boileau2020exploring}
Boileau, P., Hejazi, N.~S., and Dudoit, S. (2020).
\newblock Exploring high-dimensional biological data with sparse contrastive
  principal component analysis.
\newblock {\em Bioinformatics}, 36(11):3422--3430.

\bibitem[Brenner et~al., 2000]{brenner2000adaptive}
Brenner, N., Bialek, W., and Van~Steveninck, R. d.~R. (2000).
\newblock Adaptive rescaling maximizes information transmission.
\newblock {\em Neuron}, 26(3):695--702.

\bibitem[Cand{\`e}s et~al., 2011]{candes2011robust}
Cand{\`e}s, E.~J., Li, X., Ma, Y., and Wright, J. (2011).
\newblock Robust principal component analysis?
\newblock {\em Journal of the ACM}, 58(3):1--37.

\bibitem[Carpenter et~al., 2017]{carpenter2017stan}
Carpenter, B., Gelman, A., Hoffman, M.~D., Lee, D., Goodrich, B., Betancourt,
  M., Brubaker, M.~A., Guo, J., Li, P., and Riddell, A. (2017).
\newblock Stan: a probabilistic programming language.
\newblock {\em Grantee Submission}, 76(1):1--32.

\bibitem[Darbyshire and Hamish, 2016]{darbyshire2016pricing}
Darbyshire, J. and Hamish, J. (2016).
\newblock The pricing and hedging of interest rate derivatives: A practical
  guide to swaps.

\bibitem[Fruchter, 1954]{fruchter1954introduction}
Fruchter, B. (1954).
\newblock {\em Introduction to factor analysis.}
\newblock Van Nostrand.

\bibitem[Hastie and Stuetzle, 1989]{hastie1989principal}
Hastie, T. and Stuetzle, W. (1989).
\newblock Principal curves.
\newblock {\em Journal of the American Statistical Association},
  84(406):502--516.

\bibitem[Higuera et~al., 2015]{higuera2015self}
Higuera, C., Gardiner, K.~J., and Cios, K.~J. (2015).
\newblock Self-organizing feature maps identify proteins critical to learning
  in a mouse model of down syndrome.
\newblock {\em PloS One}, 10(6):e0129126.

\bibitem[Hoffman and Gelman, 2014]{hoffman2014no}
Hoffman, M.~D. and Gelman, A. (2014).
\newblock The no-u-turn sampler: adaptively setting path lengths in hamiltonian
  monte carlo.
\newblock {\em J. Mach. Learn. Res.}, 15(1):1593--1623.

\bibitem[Hotelling, 1933]{hotelling1933analysis}
Hotelling, H. (1933).
\newblock Analysis of a complex of statistical variables into principal
  components.
\newblock {\em Journal of Educational Psychology}, 24(6):417.

\bibitem[Jirsa et~al., 1994]{jirsa1994theoretical}
Jirsa, V.~K., Friedrich, R., Haken, H., and Kelso, J.~S. (1994).
\newblock A theoretical model of phase transitions in the human brain.
\newblock {\em Biological Cybernetics}, 71(1):27--35.

\bibitem[Kingma and Ba, 2014]{kingma2014adam}
Kingma, D.~P. and Ba, J. (2014).
\newblock Adam: A method for stochastic optimization.
\newblock {\em arXiv preprint arXiv:1412.6980}.

\bibitem[Lawrence, 2003]{lawrence2003gaussian}
Lawrence, N. (2003).
\newblock Gaussian process latent variable models for visualisation of high
  dimensional data.
\newblock {\em Advances in Neural Information Processing Systems}, 16:329--336.

\bibitem[Novembre and Stephens, 2008]{novembre2008interpreting}
Novembre, J. and Stephens, M. (2008).
\newblock Interpreting principal component analyses of spatial population
  genetic variation.
\newblock {\em Nature Genetics}, 40(5):646--649.

\bibitem[Pasini, 2017]{pasini2017principal}
Pasini, G. (2017).
\newblock Principal component analysis for stock portfolio management.
\newblock {\em International Journal of Pure and Applied Mathematics},
  115(1):153--167.

\bibitem[Rousseeuw, 1987]{rousseeuw1987silhouettes}
Rousseeuw, P.~J. (1987).
\newblock Silhouettes: a graphical aid to the interpretation and validation of
  cluster analysis.
\newblock {\em Journal of Computational and Applied Mathematics}, 20:53--65.

\bibitem[Roweis, 1998]{roweis1998algorithms}
Roweis, S.~T. (1998).
\newblock E{M} algorithms for {PCA} and {SPCA}.
\newblock In {\em Advances in Neural Information Processing Systems}, pages
  626--632.

\bibitem[Russakovsky et~al., 2015]{russakovsky2015imagenet}
Russakovsky, O., Deng, J., Su, H., Krause, J., Satheesh, S., Ma, S., Huang, Z.,
  Karpathy, A., Khosla, A., Bernstein, M., et~al. (2015).
\newblock Imagenet large scale visual recognition challenge.
\newblock {\em International Journal of Computer Vision}, 115(3):211--252.

\bibitem[Sch{\"o}lkopf et~al., 1998]{scholkopf1998nonlinear}
Sch{\"o}lkopf, B., Smola, A., and M{\"u}ller, K.-R. (1998).
\newblock Nonlinear component analysis as a kernel eigenvalue problem.
\newblock {\em Neural Computation}, 10(5):1299--1319.

\bibitem[Severson et~al., 2019]{severson2019unsupervised}
Severson, K.~A., Ghosh, S., and Ng, K. (2019).
\newblock Unsupervised learning with contrastive latent variable models.
\newblock In {\em Proceedings of the AAAI Conference on Artificial
  Intelligence}, volume~33, pages 4862--4869.

\bibitem[Syring and Martin, 2020]{syring2020gibbs}
Syring, N. and Martin, R. (2020).
\newblock Gibbs posterior concentration rates under sub-exponential type
  losses.
\newblock {\em arXiv preprint arXiv:2012.04505}.

\bibitem[Tibshirani, 1996]{tibshirani1996regression}
Tibshirani, R. (1996).
\newblock Regression shrinkage and selection via the lasso.
\newblock {\em Journal of the Royal Statistical Society: Series B},
  58(1):267--288.

\bibitem[Tipping and Bishop, 1999]{tipping1999probabilistic}
Tipping, M.~E. and Bishop, C.~M. (1999).
\newblock Probabilistic principal component analysis.
\newblock {\em Journal of the Royal Statistical Society: Series B},
  61(3):611--622.

\bibitem[Twine et~al., 2011]{twine2011whole}
Twine, N.~A., Janitz, K., Wilkins, M.~R., and Janitz, M. (2011).
\newblock Whole transcriptome sequencing reveals gene expression and splicing
  differences in brain regions affected by alzheimer's disease.
\newblock {\em PloS One}, 6(1):e16266.

\bibitem[Vidal et~al., 2005]{vidal2005generalized}
Vidal, R., Ma, Y., and Sastry, S. (2005).
\newblock Generalized principal component analysis ({GPCA}).
\newblock {\em IEEE Transactions on Pattern Analysis and Machine Intelligence},
  27(12):1945--1959.

\bibitem[Young et~al., 2018]{young2018single}
Young, M.~D., Mitchell, T.~J., Braga, F. A.~V., Tran, M.~G., Stewart, B.~J.,
  Ferdinand, J.~R., Collord, G., Botting, R.~A., Popescu, D.-M., Loudon, K.~W.,
  et~al. (2018).
\newblock Single-cell transcriptomes from human kidneys reveal the cellular
  identity of renal tumors.
\newblock {\em Science}, 361(6402):594--599.

\bibitem[Zheng et~al., 2017]{zheng2017massively}
Zheng, G.~X., Terry, J.~M., Belgrader, P., Ryvkin, P., Bent, Z.~W., Wilson, R.,
  Ziraldo, S.~B., Wheeler, T.~D., McDermott, G.~P., Zhu, J., et~al. (2017).
\newblock Massively parallel digital transcriptional profiling of single cells.
\newblock {\em Nature Communications}, 8(1):1--12.

\bibitem[Zou and Hastie, 2005]{zou2005regularization}
Zou, H. and Hastie, T. (2005).
\newblock Regularization and variable selection via the elastic net.
\newblock {\em Journal of the Royal Statistical Society: Series B},
  67(2):301--320.

\bibitem[Zou et~al., 2006]{zou2006sparse}
Zou, H., Hastie, T., and Tibshirani, R. (2006).
\newblock Sparse principal component analysis.
\newblock {\em Journal of Computational and Graphical tatistics},
  15(2):265--286.

\bibitem[Zou et~al., 2013]{zou2013contrastive}
Zou, J.~Y., Hsu, D.~J., Parkes, D.~C., and Adams, R.~P. (2013).
\newblock Contrastive learning using spectral methods.
\newblock {\em Advances in Neural Information Processing Systems},
  26:2238--2246.

\end{thebibliography}

\section{Appendix}
\subsection{Data availability}
All data are available via their respective papers. Preprocessing scripts are included in the code repository: \url{https://github.com/andrewcharlesjones/pcpca}.

\subsection{Proof of Theorem \ref{thm:CPCAloss}}\label{pf:CPCAloss}
\begin{proof}
We simplify the geometric objective function in \eqref{eqn:CPCAl2} until it matches \eqref{eqn:CPCAvar}, the statistical objective function. Let $v_1$ be the eigenvector of $C$ corresponding to the largest eigenvalue, then
\begin{align*}
  v_1 &=\underset{v^\top v=1}{\argmin}~\frac{1}{n} \sum_{i=1}^n \|x_i-vv^\top x_i\|^2-\gamma \frac{1}{m} \sum_{j=1}^m \|y_j-vv^\top y_j\|^2\\
  & = \underset{v^\top v=1}{\argmin}~\frac{1}{n} \sum_{i=1}^n \left(x_i^\top x_i-2x_i^\top vv^\top x_i+x_ivv^\top vv^\top x\right)-\gamma \frac{1}{m} \sum_{j=1}^m \left(y_j^\top y_j-y_j^\top vv^\top y_j+y_j^\top vv^\top vv^\top y_j\right)\\
  & = \underset{v^\top v=1}{\argmin}~\frac{1}{n} \sum_{i=1}^n \left(-2x_i^\top vv^\top x_i+x_ivv^\top x\right)-\gamma \frac{1}{m} \sum_{j=1}^m \left(-y_j^\top vv^\top y_j+y_j^\top vv^\top y_j\right)\\
  & = \underset{v^\top v=1}{\argmax}~\frac{1}{n} \sum_{i=1}^n x_i^\top vv^\top x_i-\gamma \frac{1}{m} \sum_{j=1}^m y_j^\top vv^\top y_j\\
  & = \underset{v^\top v=1}{\argmax}~v^\top \frac{1}{n}\sum_{i=1}^n x_ix_i^\top v-\gamma v^\top \frac{1}{m}\sum_{j=1}^m y_jy_j^\top v\\
  &=\underset{v^\top v=1}{\argmax}~v^\top C_xv-\gamma v^\top C_Yv.
\end{align*}
\end{proof}

\subsection{Proof of Lemma \ref{lem:PD}}\label{pf:PD}
\begin{proof}
Let $\lambda_1\geq \cdots\geq \lambda_D$ and $\rho_1\geq \cdots\geq \rho_D$ be the eigenvalues of $C_X$ and $C_Y$ in descending order. Assume $\gamma<\lambda_D/\rho_1$; then for any unit vector $v\in\RR^D$ with $\|v\|=1$, 
\begin{align*}
    v^\top C v &= v^\top C_Xv-\gamma v^\top C_Y v \\
    & \geq \lambda_D v^\top v-\gamma \rho_1 v^\top v\\
    &=\lambda_D-\gamma \rho_1>0,
\end{align*}
so $C$ is positive definite. 

Assume $\gamma \geq \lambda_D/\rho_1$ with corresponding eigenvectors $u_D$ and $v_1$ where $u_D=v_1$. Let $u=u_D=v_1$ with $\|u\|=1$, then
\begin{align*}
    u^\top Cu& = u^\top C_Xv-\gamma u^\top C_Yu\\
    &=  \lambda_D - \gamma \rho_1\leq 0.
\end{align*}
So $C$ is not positive definite.

\end{proof}
Note that the second half of the proof is the worst case, where the last eigenvector of $C_X$ matches the first eigenvector of $C_Y$. In order to make $C$ positive definite (PD), the strong condition is necessary. However, in practice, much larger values of $\gamma$ are sometimes allowed such that $C$ is still PD.

\subsection{Proof of Theorem \ref{thm:Weyl}}\label{pf:Weyl}
\begin{proof}
Recall that the eigenvalues of $C$, $C_X$ and $C_Y$ are $\lambda_1\geq \cdots\lambda_D\geq0$, $\mu_1\geq \cdots\mu_D\geq0$ and $\rho_1\geq\cdots\geq\rho_D\geq 0$. Then the eigenvalues of $-\gamma C_Y$ are $-\gamma \rho_D\geq\cdots\geq-\gamma \rho_1$. Since $C=C_X-\gamma C_Y=C_X+(-\gamma C_Y)$, by Weyl's inequalities, for any $j+k\geq D+d$, 
$$\lambda_d\geq \mu_j-\gamma \rho_{D-k+1}>0.%-\frac{\lambda_d}{\rho_1}\rho_1=\lambda_i-\lambda_d\geq 0.
$$
Similar to the proof of Lemma \ref{lem:PD}, if the condition is violated, there exists a $C$ such that the first $d$ eigenvalues are negative.
\end{proof}

\subsection{Proof of Corollary \ref{cly:loss}}\label{pf:loss}
\begin{proof}
By the same proof as the proof for the PCA loss, the CPCA loss is $\sum_{i=d+1}^D\lambda_i$. For a fixed $d$, we first show that increasing $\gamma$ will result in a smaller $\lambda_i$. Let $\gamma_1 < \gamma_2$,  $C_1=C_X-\gamma_1 C_Y$ and $C_2 = C_Y-\gamma_2 C_Y$. Then $C_2-C_1 =(\gamma_1-\gamma_2)C_Y$ is positive definite, hence the eigenvalues of $C_2$ are greater than those of $C_1$. This implies that increasing $\gamma$ will decrease the loss.

Then, for a fixed $\gamma<\max\left\{\frac{\eig_{d+1}(C_X)}{\eig_{1}(C_Y)},\frac{\eig_{d+1}(C_X)}{\eig_{2}(C_Y)},\cdots,\frac{\eig_D(C_X)}{\eig_{D-d}(C_Y)}\right\}$, Theorem \ref{thm:CPCAloss} implies $\lambda_{d+1}>0$, so raising $d$ to $d+1$ results in a smaller tail sum of the eigenvalues, hence a smaller loss. 
\end{proof}

\subsection{Proof of Theorem \ref{thm:PCPCA}}\label{pf:PCPCA}
First, we find the maximizer of $W$ given $\sigma^2$. Recall the marginals: $x,y\sim N(0,WW^\top+\sigma^2 \Id_D)$, and denote $A = WW^\top+\sigma^2\Id_D$. Then, taking the log of the objective function, we have
$$l(W,\sigma^2|X,Y,\sigma^2) = -\frac{n}{2}\left(D\ln(2\pi)+\ln|A|+\tr(A^{-1}C_X)\right)+\frac{\gamma m}{2}\left(D\ln(2\pi)+\ln|A|+\tr(A^{-1}C_Y)\right).$$
We drop all constants and the log likelihood becomes
$$l(W,\sigma^2)=-\frac{n-\gamma m}{2}\ln|A|-\frac{1}{2}\tr(A^{-1}(nC_X-\gamma m C_Y))=-\frac{n-\gamma m}{2}\ln|A|-\frac{1}{2}\tr(A^{-1}C).$$
Denote $C=nC_X-\gamma m C_Y$ where $C_X=\frac{1}{n}\sum_{i=1}^n x_ix_i^\top$ and $C_Y=\frac{1}{m}\sum_{j=1}^m y_jy_j^\top$. Then, we take the derivative of $l$:
$$\frac{\partial l}{\partial W}=-\frac{n-\gamma m}{2} 2A^{-1}W+\frac{1}{2}2A^{-1}CA^{-1}W.$$
Letting $\frac{\partial l}{\partial W}=0$, we have
$$A^{-1}\frac{\sum_{i=1}^n x_ix_i^\top-\gamma \sum_{j=1}^my_jy_j^\top}{n-\gamma m}A^{-1}W=A^{-1}W,$$
that is, $\frac{1}{n-\gamma m}CA^{-1}W=W$. $C=(n-\gamma m)A$ solves this equation, that is,
$$WW^\top = \frac{1}{n-\gamma m}C-\sigma^2\Id_D.$$
Assume $C=U\Lambda U^\top$, then
$$\widehat{W}_{ML}=U_d\left(\frac{\Lambda_d}{n-\gamma m}-\sigma^2\Id_d\right)^{1/2}R,$$
where $U_d$ consists of the first $d$ eigenvectors of $C$, $\Lambda_d$ contains the corresponding eigenvalues, and $R$ is any rotation matrix.  

Next, we consider $\widehat{\sigma}^2_{ML}$. Plugging $\widehat{W}_{ML}$ into the objective, we have
\begin{align*}
l(\sigma^2|X,Y,\widehat{W}_{ML})&=-\frac{n-\gamma m}{2}\ln|A|-\frac{1}{2}\tr(A^{-1}C)\\
& = -\frac{n-\gamma m}{2}\left(\sum_{i=1}^d\ln\frac{\lambda_i}{n-\gamma m}+(D-d)\ln\sigma^2\right)-\frac{1}{2}\left(d(n-\gamma m) +\frac{1}{\sigma^2}\sum_{j=d+1}^D\lambda_j\right).\\
& = -\frac{n-\gamma m}{2}\left(\sum_{i=1}^d\ln\widetilde{\lambda_i}+(D-d)\ln\sigma^2+d +\frac{1}{\sigma^2}\sum_{j=d+1}^D\widetilde{\lambda_j}\right),
\end{align*}
where $\widetilde{\lambda_i}=\frac{\lambda_j}{n-\gamma m}$. The derivative of $l$ is:
\begin{align*}
    \frac{\partial l}{\partial\sigma^2}& = -\frac{n-\gamma m}{2}\left(\frac{D-d}{\sigma^2} -\frac{1}{\sigma^4}\sum_{j=d+1}^D\widetilde{\lambda_j}\right).
\end{align*}
Letting $\frac{\partial l}{\partial\sigma^2}=0$, we have
$\frac{D-d}{\sigma^2}=\frac{1}{\sigma^4}\sum_{j=d+1}^D\widetilde{\lambda_j},$
so the MLE of $\sigma^2$ is given by
$$\widehat{\sigma}^2_{ML}=\frac{1}{D-d}\sum_{i=d+1}^D\widetilde{\lambda_i}=\frac{1}{(D-d)(n-\gamma m)}\sum_{i=d+1}^D \lambda_i.$$

We next connect PCPCA to CPCA and PPCA. When $\gamma=0$, the objective function \eqref{eqn:PCPCA} is the same as the objective of PPCA, so the MLEs are also the same. 
Alternatively, when $\sigma^2\to 0$, 
$$\widehat{W}_{ML}=U_d\Lambda^{1/2}R$$ is exactly the solution of CPCA with the new parameterization, which corresponds to eigenvectors of $C=\sum_{i}x_ix_i^\top-\gamma \sum_{j=1}^m y_jy_j^\top$, and is equivalent to the CPCA proposed by \cite{abid2018exploring} with $\gamma'=\gamma m/n$. In this sense, our parameterization is more natural since it corresponds to the likelihood function.

\subsection{Proof of Theorem \ref{thm:CPCArate}}
\label{proof:CPCArate}
First recall the following Lemma.
\begin{definition}
The the loss function $l$ is said to be of sub-exponential type if there exists $\overline{w}, K, r>0$ such that for any $w\in(0,\overline{w})$, 
\begin{equation}\label{eqn:priorcond}
d(\theta;\theta^*)>\delta\Longrightarrow \EE_Pe^{-w(l_\theta-l_{\theta^*})}\leq e^{-Kw\delta^r}. 
\end{equation}
\end{definition}

Let $m(\theta,\theta^*)=\EE_P(l_\theta-l_{\theta^*})=R(\theta)-R(\theta^*)$ and $v(\theta,\theta^*) = \EE_P \left(l_\theta-l_{\theta^*}-m(\theta,\theta^*)\right)^2$. 
\begin{lemma}[\cite{syring2020gibbs}] \label{lem:syring2020gibbs}
Assume $\varepsilon_n\to0$ and $n\varepsilon_n^r\to\infty$ for $r>0$, the prior satisfies
\[\log\Pi(\{\theta:m(\theta,\theta^*)\vee v(\theta,\theta^*)\leq \varepsilon_n^r\})\gtrsim -Mn\varepsilon_n^r\]
and the loss function is of sub-exponential type, then the Gibbs posterior distribution has asymptotic concentration rate $\varepsilon_n$ for all large enough constants $M>0$.
\end{lemma}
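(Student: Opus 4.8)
The plan is to run the standard evidence-lower-bound argument for posterior contraction in the Gibbs-posterior form of \cite{syring2020gibbs}. Write the bad set as $B_n=\{\theta:d(\theta,\theta^*)>M\varepsilon_n\}$ and center the exponent at $\theta^*$: with $\Delta_n(\theta)=R_n(\theta)-R_n(\theta^*)=\tfrac1n\sum_{i=1}^n\bigl(l_\theta(u_i)-l_{\theta^*}(u_i)\bigr)$, multiplying numerator and denominator of $\Pi_n$ by $e^{wnR_n(\theta^*)}$ gives
\[
\Pi_n(B_n)=\frac{\int_{B_n}e^{-wn\Delta_n(\theta)}\,\Pi(d\theta)}{\int_\Theta e^{-wn\Delta_n(\theta)}\,\Pi(d\theta)}=:\frac{N_n}{D_n}.
\]
I would bound $D_n$ below and $\EE\,N_n$ above, where the expectation is over the i.i.d.\ sample $u_1,\dots,u_n\sim P$, and then show $\EE\,\Pi_n(B_n)\to0$; writing $c_0$ for the constant in the prior-mass hypothesis $\log\Pi(S_n)\gtrsim-c_0 n\varepsilon_n^r$.

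For the denominator, restrict to the prior-mass set $S_n=\{\theta:m(\theta,\theta^*)\vee v(\theta,\theta^*)\le\varepsilon_n^r\}$ and apply Jensen's inequality to $x\mapsto e^{-x}$ against the normalized restriction $\Pi(\cdot\cap S_n)/\Pi(S_n)$, giving $D_n\ge\Pi(S_n)\,e^{-wn\bar\Delta_n}$ with $\bar\Delta_n=\Pi(S_n)^{-1}\int_{S_n}\Delta_n\,\Pi(d\theta)$. Since $\bar\Delta_n=\tfrac1n\sum_i h(u_i)$ for $h(u)=\Pi(S_n)^{-1}\int_{S_n}(l_\theta-l_{\theta^*})(u)\,\Pi(d\theta)$, we have $\EE\,\bar\Delta_n=\Pi(S_n)^{-1}\int_{S_n}m(\theta,\theta^*)\,\Pi(d\theta)\le\varepsilon_n^r$, while a Cauchy--Schwarz bound $\mathrm{Cov}_P(l_\theta-l_{\theta^*},l_{\theta'}-l_{\theta^*})\le\sqrt{v(\theta,\theta^*)v(\theta',\theta^*)}\le\varepsilon_n^r$ over $S_n\times S_n$ yields $\mathrm{Var}(\bar\Delta_n)\le\varepsilon_n^r/n$. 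Chebyshev's inequality then gives $P(\bar\Delta_n>2\varepsilon_n^r)\le(n\varepsilon_n^r)^{-1}\to0$, so on an event $G_n$ with $P(G_n)\to1$, combined with the prior-mass hypothesis $\Pi(S_n)\ge e^{-c_0 n\varepsilon_n^r}$, we obtain $D_n\ge e^{-(c_0+2w)n\varepsilon_n^r}$.

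For the numerator, Tonelli's theorem and the i.i.d.\ factorization give $\EE\,e^{-wn\Delta_n(\theta)}=\bigl(\EE_P e^{-w(l_\theta-l_{\theta^*})}\bigr)^n$, and on $B_n$ the sub-exponential-type hypothesis bounds $\EE_P e^{-w(l_\theta-l_{\theta^*})}\le e^{-Kw(M\varepsilon_n)^r}$ for $w\in(0,\overline{w})$; hence $\EE\,N_n\le e^{-KwM^r n\varepsilon_n^r}$. Bounding $\Pi_n(B_n)\le1$ off $G_n$,
\[
\EE\,\Pi_n(B_n)\le e^{(c_0+2w)n\varepsilon_n^r}\,\EE\,N_n+P(G_n^c)\le e^{-(KwM^r-c_0-2w)n\varepsilon_n^r}+P(G_n^c),
\]
and since $n\varepsilon_n^r\to\infty$, taking the radius constant $M$ large enough that $KwM^r>c_0+2w$ sends the first term to $0$, while the second vanishes by the Chebyshev step. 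The main obstacle is precisely the \emph{high-probability} (not merely in-expectation) lower bound on $D_n$: controlling $\bar\Delta_n$ is where the second-moment quantity $v(\theta,\theta^*)$ inside $S_n$ and the growth condition $n\varepsilon_n^r\to\infty$ become indispensable, and the final balance is what forces $M$ to be chosen large so that the numerator's exponential decay dominates the prior-mass penalty in the denominator.
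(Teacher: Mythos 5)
The paper does not prove this lemma at all --- it is imported verbatim from \cite{syring2020gibbs} and invoked as a black box (via their Theorem~3.3) in the proofs of Theorems~\ref{thm:CPCArate} and \ref{thm:PCPCArate} --- so there is no internal proof to compare against; your argument supplies what the paper delegates to the reference. Your reconstruction is correct and is essentially the standard Gibbs-posterior contraction argument of that reference: the numerator/denominator split of $\Pi_n(B_n)$, the denominator lower bound via Jensen over the normalized prior restricted to $S_n=\{\theta:m(\theta,\theta^*)\vee v(\theta,\theta^*)\le\varepsilon_n^r\}$ together with a Chebyshev high-probability bound on $\bar\Delta_n$ (where both the second-moment control $v\le\varepsilon_n^r$, via Cauchy--Schwarz on the covariances, and the growth condition $n\varepsilon_n^r\to\infty$ are used exactly as they must be), and the numerator bound via Tonelli, the i.i.d.\ factorization $\EE\,e^{-wn\Delta_n(\theta)}=\bigl(\EE_P e^{-w(l_\theta-l_{\theta^*})}\bigr)^n$, and the sub-exponential-type condition on $B_n$, with $M$ chosen large enough that $KwM^r>c_0+2w$. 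Your separation of the prior-mass constant $c_0$ from the radius constant $M$ also correctly disentangles the two roles that the single symbol $M$ plays in the lemma's statement; the only implicit hypothesis worth flagging is that the fixed learning rate $w$ must lie in $(0,\overline{w})$ so the sub-exponential bound applies, which you note in passing.
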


By the definition of the divergence: $d(\theta;\theta^*)=(R(\theta)-R(\theta^*))^{1/2}$, 
we know that
$$d(\theta;\theta^*)>\delta\Longrightarrow \EE_P e^{-w(l_\theta-l_{\theta^*})}\leq e^{-w (R(\theta)-R(\theta^*))}\leq e^{-w\delta^2},$$
so the loss is of sub-exponential type.
Then it suffices to check that the prior satisfies the conditions in Lemma \autoref{lem:syring2020gibbs}. First we calculate $m$,
\begin{equation}\label{eqn:CPCA_m}
    m(\theta,\theta^*) = R(\theta)-R(\theta^*) = \tr\left((V^*V^{*\top}-VV^\top)C)\right).
\end{equation}
Recall that $v(\theta,\theta^*) = \EE_P\left(l_\theta- l_{\theta^*}\right)^2-m(\theta,\theta^*)^2$. We show the following two lemmas to calculate $v$.

\begin{lemma}
$X\sim N(0,C)$, then $\EE[X^\top X X^\top X] = \tr(C)^2+2\tr(C^2).$
\end{lemma}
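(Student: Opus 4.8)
The plan is to reduce this fourth-moment computation to the well-known fourth moment of a univariate Gaussian by diagonalizing the covariance. First I would write $C = Q\Lambda Q^\top$ with $Q$ orthogonal and $\Lambda = \diag\{\lambda_1,\ldots,\lambda_D\}$ the eigenvalues of $C$, and set $Y = Q^\top X$. Since $Q$ is orthogonal, $X^\top X = Y^\top Y$, and $Y \sim N(0,\Lambda)$ has independent coordinates $Y_i \sim N(0,\lambda_i)$. This reduces the problem to computing $\EE\left[\left(\sum_i Y_i^2\right)^2\right]$ for independent (but not identically distributed) centered Gaussians.

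Next I would expand $\left(\sum_i Y_i^2\right)^2 = \sum_{i,j} Y_i^2 Y_j^2$ and split the double sum into its diagonal ($i=j$) and off-diagonal ($i \neq j$) parts. For $i \neq j$, independence gives $\EE[Y_i^2 Y_j^2] = \lambda_i \lambda_j$; for $i = j$, the fourth moment of a centered Gaussian gives $\EE[Y_i^4] = 3\lambda_i^2$. Collecting the terms and recombining the diagonal back into the full sum yields $\sum_{i,j} \lambda_i \lambda_j + 2\sum_i \lambda_i^2 = \left(\sum_i \lambda_i\right)^2 + 2\sum_i \lambda_i^2$.

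Finally I would translate back to trace notation using $\sum_i \lambda_i = \tr(C)$ and $\sum_i \lambda_i^2 = \tr(C^2)$, which gives exactly the claimed identity $\tr(C)^2 + 2\tr(C^2)$.

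There is no serious obstacle here; the only step requiring mild care is the bookkeeping between diagonal and off-diagonal terms and remembering the factor of $3$ in the univariate fourth moment. An alternative route, which I would mention as a cross-check, is to apply Isserlis' (Wick's) theorem directly to $\EE[X_i X_i X_j X_j] = C_{ii}C_{jj} + 2C_{ij}^2$ and sum over $i,j$, using the symmetry of $C$ to write $\sum_{i,j} C_{ij}^2 = \tr(C^2)$; this bypasses the change of variables but still rests on the Gaussian fourth-moment formula.
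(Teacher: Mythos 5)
Your proof is correct, but your primary route differs from the paper's. The paper works directly with the bivariate marginals: it writes $\EE[X^\top X X^\top X]=\sum_{i,j}\EE(X_i^2X_j^2)$, applies the Gaussian fourth-moment (Isserlis/Wick) identity $\EE(X_i^2X_j^2)=C_{ii}C_{jj}+2C_{ij}^2$ to each pair, and sums to get $\tr(C)^2+2\tr(CC^\top)=\tr(C)^2+2\tr(C^2)$ --- exactly the ``alternative cross-check'' you mention at the end. Your main argument instead diagonalizes $C=Q\Lambda Q^\top$, uses orthogonal invariance of $X^\top X$ to reduce to independent coordinates $Y_i\sim N(0,\lambda_i)$, and needs only the univariate moment $\EE Y_i^4=3\lambda_i^2$ plus independence for the cross terms; this is more elementary in that it avoids the bivariate Wick formula entirely. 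The trade-off is that the paper's coordinatewise computation transfers directly to the companion lemma $\EE[X^\top X X^\top AX]=\tr(C)\tr(AC)+2\tr(AC^2)$ (there the paper sets $Y=A^{1/2}X$ and reuses the same bivariate identity), whereas your diagonalization of $C$ does not simultaneously diagonalize an arbitrary symmetric $A$, so you would have to fall back on the Wick route for that generalization anyway. Both arguments are complete and the bookkeeping in yours --- separating the diagonal terms, the factor of $3$, and recombining --- is handled correctly.
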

\begin{proof}
Since $\EE [X^\top X X^\top X]=\EE \left(\sum_{i,j} X_i^2X_j^2\right)=\sum_{i,j} \EE(X_i^2X_j^2)$, we start with $\EE (X_i^2X_j^2)$. Observe that $(X_i,X_j)\sim N\left(\begin{bmatrix} 0\\ 0
\end{bmatrix}, \begin{bmatrix} C_{ii} & C_{ij}\\ C_{ij} & C_{jj}
\end{bmatrix}\right)$, so $\EE(X_i^2X_j^2) = C_{ii}C_{jj}+2C_{ij}^2$. Then we have
\begin{align*}
    &\EE [X^\top X X^\top X]=\EE \left(\sum_{i,j} X_i^2X_j^2\right)=\sum_{i,j} \EE(X_i^2X_j^2)\\
    & = \sum_{i,j}\left(C_{ii}C_{jj}+2C_{ij}^2\right)= \tr(C)^2+2\tr(CC^\top).
\end{align*}
\end{proof}

\begin{lemma}
$X\sim N(0,C)$, and $A$ is symmetric, then $\EE[X^\top X X^\top AX] = \tr(C)\tr(AC)+2\tr(AC^2)$.
\end{lemma}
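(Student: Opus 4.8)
The plan is to reduce this fourth-order mixed moment to a pairing (Isserlis/Wick) computation, in exactly the spirit of the preceding lemma. First I would expand both quadratic forms into coordinates: writing $X^\top X = \sum_i X_i^2$ and $X^\top A X = \sum_{k,l} A_{kl} X_k X_l$, linearity of expectation gives
\[
\EE[X^\top X\, X^\top AX] = \sum_{i}\sum_{k,l} A_{kl}\, \EE[X_i^2 X_k X_l].
\]
This isolates the single quantity I need to understand: the scalar fourth moment $\EE[X_i^2 X_k X_l]$ of the centered Gaussian vector $X$.

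The key step is evaluating this fourth moment. Since the coordinates of $X$ are jointly Gaussian with mean zero, Isserlis' theorem expresses $\EE[X_a X_b X_c X_d]$ as the sum over the three pairings, $C_{ab}C_{cd} + C_{ac}C_{bd} + C_{ad}C_{bc}$. Specializing to $a=b=i$, $c=k$, $d=l$ yields $\EE[X_i^2 X_k X_l] = C_{ii}C_{kl} + 2 C_{ik}C_{il}$. One could instead derive this directly from the relevant $3\times 3$ covariance structure, paralleling the $2\times 2$ computation used in the previous lemma, but the pairing formula is cleaner here.

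Substituting and splitting the sum into its two pieces is then bookkeeping. The first piece factors as $\big(\sum_i C_{ii}\big)\big(\sum_{k,l} A_{kl} C_{kl}\big) = \tr(C)\,\tr(AC)$, where I use symmetry of $C$ to identify $\sum_{k,l} A_{kl} C_{kl} = \tr(AC^\top) = \tr(AC)$. For the second piece, summing over $i$ first gives $\sum_i C_{ik} C_{il} = (C^2)_{kl}$ (again using $C=C^\top$), so the term becomes $2\sum_{k,l} A_{kl} (C^2)_{kl} = 2\tr(AC^2)$. Adding the two pieces produces the claimed identity.

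I do not anticipate a genuine obstacle; the only care needed is the index bookkeeping and the repeated use of the symmetry of $C$ to collapse coordinate sums into traces. As a sanity check, setting $A=\Id_D$ recovers the previous lemma, $\tr(C)^2 + 2\tr(C^2)$. Should a coordinate-free argument be preferred, an alternative route is to whiten via $Y = C^{-1/2} X \sim N(0,\Id_D)$ and apply the standard identity $\EE[(Y^\top M Y)(Y^\top N Y)] = \tr(M)\tr(N) + 2\tr(MN)$ for symmetric $M,N$ with $M=C$ and $N = C^{1/2} A C^{1/2}$, handling a non-invertible $C$ by continuity; but the direct pairing computation is more self-contained.
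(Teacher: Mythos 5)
Your proof is correct, and it takes a genuinely different route from the paper's. The paper substitutes $Y = A^{1/2}X$, rewrites the product as $\sum_{i,j} X_i^2 Y_j^2$, and reduces everything to the bivariate Gaussian moment $\EE[X_i^2Y_j^2] = \var(X_i)\var(Y_j) + 2\cov(X_i,Y_j)^2$ --- the same device used for the preceding lemma --- before collapsing the double sum into traces of $A^{1/2}CA^{1/2}$. You instead keep $X^\top A X = \sum_{k,l}A_{kl}X_kX_l$ in coordinates and apply the full three-pairing Isserlis identity to $\EE[X_i^2X_kX_l]$, then do the trace bookkeeping over a triple sum. The two arguments are equivalent instances of Wick's theorem, but yours has a concrete advantage: it requires only that $A$ be symmetric, exactly as the lemma is stated, whereas the paper's use of $A^{1/2}$ tacitly assumes $A$ is positive semidefinite --- an assumption that does not hold in the intended application, where the lemma is invoked with $A = \Delta = V^*V^{*\top} - VV^\top$, a symmetric but generally indefinite matrix (the paper's computation can be repaired by a continuity argument or by splitting $A$ into its positive and negative parts, but your version sidesteps the issue entirely). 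What the paper's substitution buys in exchange is the ability to reuse the $2\times 2$ moment computation verbatim and to keep the index bookkeeping to a double rather than triple sum; your sanity check that $A=\Id_D$ recovers the previous lemma is a nice additional touch.
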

\begin{proof}
Let $Y=A^{1/2}X$, so $Y\sim N(0,A^{1/2}CA^{1/2})$.
Since $\EE [X^\top X X^\top AX]=\EE \left(\sum_{i,j} X_i^2Y_j^2\right)=\sum_{i,j} \EE(X_i^2Y_j^2)$, we start with $\EE (X_i^2Y_j^2)$. Observe that $$(X_i,Y_j)\sim N\left(\begin{bmatrix} 0\\ 0
\end{bmatrix}, \begin{bmatrix} C_{ii} & (CA^{1/2})_{ij}\\ (CA^{1/2})_{ij} & (A^ {1/2}CA^{1/2})_{jj}
\end{bmatrix}\right),$$ so $\EE(X_i^2Y_j^2) = C_{ii}(A^ {1/2}CA^{1/2})_{jj}+2C_{ij}^2$. Then we have
\begin{align*}
    &\EE [X^\top X X^\top AX]=\EE \left(\sum_{i,j} X_i^2Y_j^2\right)=\sum_{i,j} \EE(X_i^2Y_j^2)\\
    & = \sum_{i,j}\left(C_{ii}(A^ {1/2}CA^{1/2})_{jj}+2(CA^{1/2})_{ij}^2\right)= \tr(C)\tr(A^ {1/2}CA^{1/2})+2\tr(A^{1/2}CC^\top A^{1/2})\\
    & = \tr(C)\tr(AC)+2\tr(AC^2).
\end{align*}
\end{proof}
Let $\Delta=V^*V^{*\top}-VV^\top$, then observe that

\begin{align*}
  &\EE_p\left[l_\theta(u)-l_{\theta^*}(u)\right]^2=\EE_P\left[(-\gamma)^{2\alpha}(x^\top V^*V^{*\top} x -x^\top VV^\top x)^2\right]  \\
  & = \beta \EE_{x\sim P_F}\left[x^\top \Delta xx^\top \Delta x\right]+(1-\beta)\gamma^2\EE_{x\sim P_B} \left[x^\top \Delta xx^\top \Delta x\right]\\
  & = \beta \left[\tr(\Delta C_F)^2+2\tr(\Delta C_F\Delta C_F)\right]+(1-\beta)\gamma^2\left[\tr(\Delta C_B)^2+2\tr(\Delta C_B\Delta C_B)\right].
\end{align*}
Rewrite $m$ in a similar form:
\begin{align*}
    m(\theta,\theta^*)& = R(\theta)-R(\theta^*)= \EE_P(l_\theta-l_{\theta^*})\\
    & = \EE_P\left[(-\gamma)^\alpha x^\top \Delta x\right]\\
    & = \beta\tr(\Delta C_F) +(1-\beta)\gamma\tr(\Delta C_B). 
\end{align*}
Now we can calculate $v(\theta,\theta^*)$:
\begin{align*}
    &v(\theta,\theta^*) = \EE_p\left[l_\theta(u)-l_{\theta^*}(u)\right]^2-m(\theta,\theta^*)^2\\
    & = \beta \left[\tr(\Delta  C_F)^2+2\tr(\Delta C_F\Delta C_F)\right]+(1-\beta)\gamma^2\left[\tr(\Delta  C_B)^2+2\tr(\Delta  C_B\Delta C_B)\right]\\
    &~~~ -\left[\beta\tr(\Delta C_F) +(1-\beta)\gamma\tr(\Delta C_B)\right]^2\\
    & = (\beta-\beta^2)\tr(\Delta C_F)^2+\gamma^2(\beta-\beta^2)\tr(\Delta C_B)^2+2\beta\tr(\Delta C_F\Delta C_F)\\
    &~~~~+2\gamma^2(1-\beta)\tr(\Delta C_B\Delta C_B)-2\gamma\beta(1-\beta)\tr(\Delta C_F)\tr(\Delta C_B).
\end{align*}

\begin{proof}
Observe that $\|\Delta \|\lambda^F_D\leq|\tr(\Delta C_F)|\leq \|\Delta \|\lambda^F_1$, where $\lambda^F_1$ and $\lambda^F_D$ are the largest and smallest eigenvalue of $C_F$. Then by the above calculation, we know that
$$m(\theta,\theta^*)\sim\|VV^\top-V^*V^{*\top}\|,~~v(\theta,\theta^*)\sim\|VV^\top-V^*V^{*\top}\|^2.$$
When $n$ is sufficiently large, there exists constant $c$ such that
$$\left\{\theta:\|VV^\top-V^*V^{*\top}\|\leq cn^{-1/2}\right\}\subset \left\{\theta:m(\theta,\theta^*)\vee v(\theta,\theta^*)\leq n^{-1/2}\right\}.$$
So it suffices to check the prior $\Pi$ assigns enough mass around $V^*$ w.r.t. the operator norm. 
Recall that the Riemannian volume measure on $\Gr(D,d)$, denoted by $\mathcal{H}$, is  $O(n)$ invariant, also known as the Haar measure, while the distance on $\Gr(D,d)$ is given by $$d(V_1,V_2) = \|V_1V_1^\top-V_2V_2^\top\|.$$ Denote the ball centered at $V$ with radius $r$ w.r.t. this distance by $B(V,r)$, then there exists constant $c$ such that
$$\frac{1}{c}r^{d(D-d)}\leq\mathcal{H}(B(V,r))\leq cr^{d(D-d)},~~\forall V\in \Gr(D,d),~r>0.$$

When the prior $\Pi$ is uniform w.r.t. the Haar measure, \begin{align*}
\Pi\left\{\theta:m(\theta,\theta^*)\vee v(\theta,\theta^*)\leq n^{-1/2}\right\}\gtrsim\Pi\left\{\theta:\|\Delta \|_o\leq cn^{-1/2}\right\}\gtrsim  \left(n^{-1/2}\right)^{d(D-d)},
\end{align*}
where $d(D-d)=\dim(\Gr(D,d))$, so by \cite[Theorem 3.3]{syring2020gibbs}, the posterior contraction rate of the Gibbs posterior is $n^{-1/2}$, which is optimal.
\end{proof}

\subsection{Proof of Theorem \ref{thm:PCPCArate}}
\label{proof:PCPCArate}
As in the proof of Theorem \ref{thm:CPCArate}, the loss is of sub-exponential type.
Then it suffices to check that the prior $\Pi$ satisfies Lemma \autoref{lem:syring2020gibbs}. To do so, we first calculate $m$ and $v$. Let $\delta = \log|A|-\log|A^*|$ and $\Delta = A^{-1}-{A^*}^{-1}$, then 
\begin{align*}
    m(\theta,\theta^*) &=R(\theta)-R(\theta^*)=\frac{\beta-(1-\beta)\gamma}{2}(\log|A|-\log|A^*|)+\frac{\tr((A^{-1}-A^{*-1})C)}{2}\\
    & = \frac{\beta-(1-\beta)\gamma}{2}\delta+\frac{1}{2}\tr(\Delta C)=O(\sqrt{\delta^2+\|\Delta\|^2}).
\end{align*}
Now, to calculate $v$, we have
\begin{align*}
  &\EE_p\left[l_\theta(u)-l_{\theta^*}(u)\right]^2=\EE_P\left[(-\gamma)^{2\alpha}\left(\frac{1}{2}(\log|A|-\log|A^*|)+\frac{1}{2}x^\top(A^{-1}-{A^{*}}^{-1})x\right)^2\right]  \\
  & = \frac{\beta}{4} \EE_{x\sim P_F}\left[\delta^2+2\delta x^\top\Delta x+x^\top \Delta x x^\top \Delta x\right]+\frac{(1-\beta)\gamma^2}{4}\EE_{x\sim P_B} \left[\delta^2+2\delta x^\top\Delta x+x^\top \Delta x x^\top \Delta x\right]\\
  & = \frac{\beta}{4} \left[\delta^2+2\delta \tr(\Delta C_F)+\tr(\Delta C_F)^2+2\tr(\Delta C_F\Delta C_F)\right]\\
  & ~~~+\frac{(1-\beta)\gamma^2}{4} \left[\delta^2+2\delta \tr(\Delta C_B)+\tr(\Delta C_B)^2+2\tr(\Delta C_B\Delta C_B)\right]\\
  &=O(\delta^2+\|\Delta\|^2). 
\end{align*}
As a result,
\begin{align*}
    &v(\theta,\theta^*) = \EE_p\left[l_\theta(u)-l_{\theta^*}(u)\right]^2-m(\theta,\theta^*)^2=O( \delta^2+\|\Delta\|^2). 
\end{align*}

\begin{proof}
By the assumption that $\sigma^2\geq \sigma_0^2$, all eigenvalues of $A=WW^\top+\sigma^2\Id_D$ and $A^*=W^*W^{*\top}+\sigma^{*2}\Id_D$ are lower-bounded by $\sigma_0^2$. As a result, both $\log|\cdot|$ and $\tr(\cdot^{-1}C)$ are Lipschitz, and both $m$ and $v$ can be bounded by the distance between parameters. Then by the above calculation, there exists a $c$ such that
$$m(\theta,\theta^*)\leq c\sqrt{\|WW^\top-W^*W^{*\top}\|^2+(\sigma^2-\sigma^{*2})^2},$$
$$v(\theta,\theta^*)\leq c\left(\|WW^\top-W^*W^{*\top}\|^2+(\sigma^2-\sigma^{*2})^2\right).$$
When $n$ is sufficiently large, there exists a constant $c$ such that
$$\left\{\theta:\sqrt{\|WW^\top-W^*W^{*\top}\|^2+(\sigma^2-\sigma^{*2})^2}\leq cn^{-1/2}\right\}\subset \left\{\theta:m(\theta,\theta^*)\vee v(\theta,\theta^*)\leq n^{-1/2}\right\}.$$
So it suffices to check that the prior $\Pi$ assigns enough mass around $(V^*,\sigma^{*2})$ w.r.t. the product measure. Denote the ball centered at $(V,\sigma^2)$ with radius $r$ w.r.t. this distance by $B(V,\sigma^2,r)$, then there exists a constant $c$ such that
$$\frac{1}{c}r^{Dd+1}\leq\mathrm{Vol}(B(V,\sigma^2,r))\leq cr^{Dd+1},~~\forall (W,\sigma^2)\in\RR^{D\times d}\times [\sigma_0^2,\infty),~r>0.$$

When the prior $\Pi$ is uniform w.r.t. the Haar measure, \begin{align*}
&~~~~\Pi\left\{\theta:m(\theta,\theta^*)\vee v(\theta,\theta^*)\leq n^{-1/2}\right\}\\
&\gtrsim\Pi\left\{\theta:\sqrt{\|WW^\top-W^*W^{*\top}\|^2+(\sigma^2-\sigma^{*2})^2}\leq cn^{-1/2}\right\}\gtrsim  \left(n^{-1/2}\right)^{dD+1}.
\end{align*}
We conclude that, by \cite[Theorem 3.3]{syring2020gibbs}, the posterior contraction rate of the Gibbs posterior is $n^{-1/2}$, which is optimal.
\end{proof}

	%%%%%%%%%%%%%%%%%%%%%%%%%%%%%%%%%%%%%%%%%%%%%%%%%%%%%%%%%%%%%%%%%%%%%%
\end{document}